%

\input ./style/arxiv-general.cfg
\documentclass[aos,MSNbibl,seceqn,dvips]{arximspdf}
\makeatletter
   \@ifpackageloaded{graphicx}{}{\usepackage{graphicx}}
\makeatother
\usepackage{url,breakurl}

%

\doi{10.1214/15-AOS1337}
\volume{43}
\issue{5}
\pubyear{2015}
\firstpage{2055}
\lastpage{2085}
\docsubty{FLA}

\makeatletter

\def\eqd{\,{\stackrel{d}{=}}\,}
\def\geqd{\,{\stackrel{d}{\ge}}\,}

\newtheorem{theorem}{Theorem}

\newtheorem{lemma}{Lemma}

\newproclaim{assumption}{Assumption}
\newproclaim{definition}{Definition}
\newproclaim{remark}{Remark}
\newproclaim{example}{Example}
\newcommand{\eqref}[1]{(\ref{#1})}


\newcommand{\diag}{\operatorname{diag}}
\newcommand{\sign}{\operatorname{sign}}

\def\R{\mathbb{R}}
\newcommand{\ident}[0]{\mathbf{I}}
\newcommand{\iidsim}{\,\stackrel{\mathsf{i.i.d.}}{\sim}\,}

\newcommand{\X}{\mathbf{X}}
\newcommand{\y}{\mathbf{y}}
\renewcommand{\b}{\bolds{\beta}}
\newcommand{\e}{\mathbf{z}}
\newcommand{\sig}{\sigma}
\newcommand{\Xproxy}{\tilde{\X}}
\newcommand{\thr}{T}
\newcommand{\swap}{\mathsf{swap}}
\newcommand{\Wset}{\mathcal{W}}
\makeatother

\begin{document}
\begin{frontmatter}

\title{Controlling the false discovery rate via knockoffs}
\runtitle{FDR control via Knockoffs}

\begin{aug}
\author[A]{\fnms{Rina Foygel}~\snm{Barber}\thanksref{T1}\ead[label=e1]{rina@uchicago.edu}}
\and
\author[B]{\fnms{Emmanuel J.}~\snm{Cand\`es}\thanksref{T2}\ead
[label=e2]{candes@stanford.edu}}
\runauthor{R.~F. Barber and E.~J. Cand\`es}
\affiliation{University of Chicago and Stanford University}
\thankstext{T1}{Supported in part by NSF Grant DMS-12-03762.}
\thankstext{T2}{Supported in part by AFOSR  Grant
FA9550-09-1-0643,  NSF  Grant CCF-0963835 and by the Math $+$ X
Award from the Simons Foundation.}
\address[A]{Department of Statistics\\
University of Chicago\\
Chicago, Illinois 60637\\
USA\\
\printead{e1}}

\address[B]{Department of Statisitcs\\
Stanford University\\
390 Serra Mall\\
Stanford, California 94305\\
USA\\
\printead{e2}}
\end{aug}

%
\received{\smonth{5} \syear{2014}}
%
\revised{\smonth{4} \syear{2015}}

%
\begin{abstract}
In many fields of science, we observe a response variable together
with a large number of potential explanatory variables, and would
like to be able to discover which variables are truly associated
with the response. At the same time, we need to know that the false
discovery rate (FDR)---the expected fraction of false discoveries
among all discoveries---is not too high, in order to assure the
scientist that most of the discoveries are indeed true and
replicable. This paper introduces the \textit{knockoff filter}, a new
variable selection procedure controlling the FDR in the statistical
linear model whenever there are at least as many observations as
variables. This method achieves exact FDR control in finite sample
settings no matter the design or covariates, the number of variables
in the model, or the amplitudes of the unknown regression
coefficients, and does not require any knowledge of the noise
level. As the name suggests, the
method operates by manufacturing knockoff variables that are
cheap---their construction does not require any new data---and are
designed to mimic the correlation structure found within the
existing variables, in a way that allows for accurate FDR control,
beyond what is possible with permutation-based methods. The method
of knockoffs is very general and flexible, and can work with a broad
class of test statistics. We test the method in combination with
statistics from the Lasso for sparse regression, and obtain
empirical results showing that the resulting method has far more
power than existing selection rules when the proportion of null
variables is high.
\end{abstract}

%
\begin{keyword}[class=AMS]
\kwd{62F03}
\kwd{62J05}
\end{keyword}
\begin{keyword}
\kwd{Variable selection}
\kwd{false discovery rate (FDR)}
\kwd{sequential hypothesis testing}
\kwd{martingale theory}
\kwd{permutation methods}
\kwd{Lasso}
\end{keyword}
\end{frontmatter}

\section{Introduction}

Understanding the finite sample inferential properties of procedures
that select and fit a regression model to data is possibly one of the
most important topics of current research in theoretical
statistics. This paper is about this problem and focuses on the
accuracy of variable selection in the classical linear model under
arbitrary designs.

\subsection{The false discovery rate in variable selection}
\label{sec:fdr}

Suppose we have\break recorded a response variable of
interest $y$ and many potentially explanatory variables $X_j$ on $n$
observational units. Our observations obey the classical linear
regression model
%
\begin{equation}
\label{eqn:model} \y=\X\b+\e,
\end{equation}
where as usual, $\y\in\R^n$ is a vector of responses,
$\X\in\R^{n\times p}$ is a known design matrix, $\b\in\R^p$ is an
unknown vector of coefficients and $\e\sim\mathcal{N}(0,\sig
^2\ident)$ is
Gaussian noise. Because we are interested in valid inference from
finitely many samples, we shall mostly restrict our attention to the
case where $n \ge p$ as otherwise the model would not even be
identifiable. Now in modern settings, it is often the case that there
are typically just a few relevant variables among the many that have
been recorded. In genetics, for instance, we typically expect that
only a few genes are associated with a phenotype $y$ of interest. In
terms of the linear model \eqref{eqn:model}, this means that only a few
components of the parameter $\b$ are expected to be nonzero. While
there certainly is no shortage of data fitting strategies, it is not
always clear whether any of these offers real guarantees on the
accuracy of the selection with a finite sample size. In this paper, we
propose  controlling the false discovery rate (FDR) among all the
selected variables, that is, all the variables included in the model, and
develop novel and very concrete procedures, which provably achieve
this goal.

Informally, the FDR is the expected proportion of falsely selected
variables, a false discovery being a selected variable not appearing
in the true model. Formally,\vspace*{1pt} the FDR of a selection procedure
returning a subset $\hat S \subset\{1, \ldots, p\}$ of variables is
defined as
%
\begin{equation}
\label{eqn:fdr} \mathsf{FDR} = \mathbb{E} \biggl[ \frac{\#\{j :
\beta_j = 0 \mbox{
and } j \in\hat S\}}{
\#\{j : j \in\hat S\}\vee1} \biggr].
\end{equation}
(The definition of the denominator above sets the fraction
to zero in the case that zero features are selected,
i.e., $\hat{S}=\varnothing$; here we use the notation $a\vee
b=\max
\{a,b\}$.)
We will say that a selection rule controls
the FDR at level $q$ if its FDR is guaranteed to be at most $q$ no
matter the value of the coefficients $\b$. This
definition asks to control the type I error averaged over the selected
variables and is both meaningful and operational. Imagine we have a
procedure that has just made 100 discoveries. Then roughly speaking,
if our procedure is known to control the FDR at the 10\% level, this
means that we can expect at most 10 of these discoveries to be false
and, therefore, at least 90 to be true. In other words,
if the collected data were the outcome of a
scientific experiment, then we would expect that most of the variables
selected by the knockoff procedure correspond to real effects
that could be reproduced in follow-up experiments.

In the language of hypothesis testing, we are interested in the $p$
hypotheses $H_j : \beta_j = 0$ and wish to find a multiple comparison
procedure able to reject individual hypotheses while controlling the
FDR. This is the reason why we will at times use terminology from this
literature, and we may say that $H_j$ has been rejected to mean that feature
$j$ has been selected, or we may say that the data provide evidence
against $H_j$
to mean that the $j$th variable likely belongs to the model.

\subsection{The knockoff filter}
\label{sec:new}

This paper introduces a general FDR controlling procedure that is
guaranteed to work under \textit{any} fixed design $\X\in\R
^{n\times p}$,
as long as $n> p$, and the response $\y$ follows a linear Gaussian
model as in \eqref{eqn:model}. An important feature of this procedure is
that it does not require any knowledge of the noise level
$\sigma$. Also, it does not assume any knowledge about the number of
variables in the model, which can be arbitrary. We now outline the
steps of this new method.

\textit{Step} 1: \textit{Construct knockoffs}. For each feature $\X_j$ in
the model (i.e., the $j$th column of $\X$), we construct a ``knockoff''
feature $\Xproxy_j$. The goal of the knockoff variables is to imitate
the correlation structure of the original features in a very specific
way that
will allow for FDR control.

Specifically, to construct the knockoffs, we first calculate the Gram matrix
$\bolds{\Sigma}=\X^\top\X$ of the original features,\setcounter{footnote}{2}\footnote{We assume
throughout that $\bolds\Sigma$ is invertible as the model would
otherwise not be identifiable.} after normalizing each feature such
that ${\Sigma}_{jj}= \|\X_j\|^2_2 = 1$ for all $j$. We will ensure that
these knockoff features obey
%
\begin{equation}
\label{eqn:construct2} \Xproxy^\top\Xproxy= \bolds{\Sigma},\qquad \X ^\top
\Xproxy= \bolds{\Sigma} - \diag\{\mathbf{s}\},
\end{equation}
where $\mathbf s$ is a $p$-dimensional nonnegative vector. In words,
$\Xproxy$ exhibits the same covariance structure as the original
design $\X$, but in addition, the correlations between distinct
original and knockoff variables are the same as those between the
originals (because $\bolds\Sigma$ and $\bolds\Sigma-\diag\{\mathbf
s\}$ are
equal on off-diagonal entries),
\[
\X_j^\top\Xproxy_k = \X_j^\top
\X_k \qquad\mbox{for all }j\neq k.
\]
However, comparing a feature $\X_j$ to its
knockoff $\Xproxy_j$, we see that
\[
\X_j^\top\Xproxy_j = \Sigma_{jj}-
s_j = 1- s_j,
\]
while $\X_j^\top\X_j =\Xproxy_j^\top\Xproxy_j =1$.
To ensure that our
method has good statistical power to detect signals, we will see that
we should choose the entries of $\mathbf s$ as large
as possible so that a variable $\X_j$
is not too similar to its knockoff $\Xproxy_j$.

A strategy for constructing $\Xproxy$ is to choose $\mathbf s \in
\mathbb{R}^p_+$ satisfying $\diag\{\mathbf{s}\}\preceq2\bolds
\Sigma$, and
construct the $n \times p$ matrix $\Xproxy$ of knockoff features as
%
\begin{equation}
\label{eqn:construct} \Xproxy= \X \bigl(\ident- \bolds\Sigma ^{-1}\diag\{
\mathbf{s}\} \bigr) + \tilde{\mathbf U} {\mathbf C};
\end{equation}
here, $\tilde{{\mathbf U}}$ is an $n \times p$ orthonormal matrix that is
orthogonal\footnote{In this version of the construction, we are
implicitly assuming $n\geq2p$. Section~\ref{sec:p_or_2p} explains how to
extend this method to the regime $p< n <2p$. } to the span of
the features $\X$, and ${\mathbf C}^\top{\mathbf C} = 2\diag\{
\mathbf{s}\}-
\diag\{\mathbf{s}\} \bolds\Sigma^{-1} \diag\{\mathbf{s}\}$ is a Cholesky
decomposition (whose existence is guaranteed by the condition
$\diag\{\mathbf s\}\preceq2\bolds\Sigma$; see Section~\ref{sec:2p}
for details).

\textit{Step} 2: \textit{Calculate statistics for each pair of original and
knockoff variables}. We now wish to introduce the statistics $W_j$ for
each $\beta_j \in\{1, \ldots, p\}$, which will help us tease apart
those variables that are in the model from those that are
not. These $W_j$'s are constructed so that large positive values are
evidence against the null hypothesis $\beta_j = 0$.

In this instance, we consider the Lasso model
\cite{tibshirani1996regression}, an $\ell_1$-norm penalized regression
that promotes sparse estimates of the coefficients $\b$, given by
%
\begin{equation}
\label{eqn:lasso} \hat{\b}(\lambda) = \mathop{\operatorname{arg min}}_{\mathbf b} \biggl\{
\frac{1}{2}\| \y- \X\mathbf{b}\|^2_2+\lambda\|
\mathbf{b}\|_1 \biggr\}.
\end{equation}
For sparse linear models, the Lasso is known to be asymptotically
accurate for both variable selection and for coefficient or signal
estimation (see, e.g.,
\cite{bickel2009simultaneous,yanivLASSO,zhang2008sparsity,zhao2006model}),
and so even in a nonasymptotic setting, we will typically see
$\hat{\b}(\lambda)$ including many signal variables and few null
variables at some value of the penalty parameter $\lambda$.
Consider
the point $\lambda$ on the Lasso path at which feature
$\X_j$ first enters the model,
%
\begin{equation}
\label{eqn:lassoZ} \mbox{Test statistic for feature $j$}= \sup \bigl\{\lambda:\hat{
\beta}_j(\lambda) \neq0 \bigr\},
\end{equation}
which is likely to be large for most of the signals, and small for
most of the null variables. However, to be able to quantify this and
choose an appropriate threshold for variable selection, we need to use
the knockoff variables to calibrate our threshold. With this in mind,
we instead compute the statistics in \eqref{eqn:lassoZ} on the
augmented $n
\times2p$ design matrix $[\X\ \Xproxy]$ (this is the columnwise
concatenation of $\X$ and $\Xproxy$), so that $[\X\ \Xproxy]$ replaces
$\X$ in \eqref{eqn:lasso}. This yields a $2p$-dimensional vector $(Z_1,
\ldots, Z_p$, $ \tilde{Z}_1, \ldots, \tilde{Z}_p)$. Finally, for each
$j \in\{1, \ldots, p\}$, we set
%
\begin{equation}
\label{eqn:W} W_j = Z_j \vee\tilde{Z}_j
\cdot\cases{ +1, &\quad $Z_j > \tilde{Z}_j$,\vspace*{2pt}
\cr
-1, &\quad $Z_j < \tilde{Z}_j$}
\end{equation}
(we can set $W_j$ to zero in case of equality $Z_j = \tilde{Z}_j$). A
large positive value of $W_j$ indicates that variable $\X_j$ enters
the Lasso model early (at some large value of $\lambda$) and that it
does so before its knockoff copy $\Xproxy_j$. Hence this is an
indication that this variable is a genuine signal and belongs in the
model. We may also consider other alternatives for constructing the
$W_j$'s: for instance, instead of recording the variables' entry into
the Lasso model, we can consider forward selection methods and record
the order in which the variables are added to the model; see
Section~\ref{sec:Wstats} for this and other alternatives.

In Section~\ref{sec:mainresults}, we discuss a broader methodology,
where the
statistics $W_j$ may be defined in any manner that satisfies the
sufficiency property and the antisymmetry property, which we will
define later on; the construction above is a specific instance that we
find to perform well empirically.

\textit{Step} 3: \textit{Calculate a data-dependent threshold for the
statistics}. We wish to select variables such that $W_j$ is large
and positive, that is, such that $W_j \ge t$ for some $t > 0$. Letting $q$
be the target FDR, define a data-dependent threshold $\thr$ as
%
\begin{equation}
\label{eqn:thresh} \thr=\operatorname{min} \biggl\{t\in\Wset:\frac{\#\{
j:W_j \leq-t\}}{\#\{
j:W_j \geq t\}\vee1}\leq q \biggr
\}
\end{equation}
or $\thr=+\infty$ if this set is empty, where
$\Wset=\{|W_j|:j=1,\ldots,p\}\setminus\{0\}$ is the set of unique
nonzero\footnote{If $W_j=0$ for some feature $\X_j$, then this gives
no evidence for rejecting the hypothesis $\beta_j=0$, and so our
method will never select such variables. } values attained by the
$|W_j|$'s. We shall see that the fraction appearing above is an
estimate of the proportion of false discoveries if we are to select
all features $j$'s with $W_j \geq t$. For this reason, we will often
refer to this fraction as the \textit{knockoff estimate of FDP}.

For a visual representation of this step, see
Figure~\ref{fig:square}, where we plot the point $(Z_j,\tilde Z_j)$
for each
feature $j$, with black dots denoting null features and red squares
denoting true signals. Recall that $W_j$ is positive if the original
variable is selected before its knockoff (i.e., $Z_j>\tilde Z_j$) and
is negative otherwise (i.e., $Z_j<\tilde Z_j$). Therefore a feature $j$
whose point lies below the dashed diagonal line in Figure~\ref{fig:square}
then has a positive value of $W_j$, while points above the diagonal
are assigned negative $W_j$'s. For a given value of $t$, the numerator
and denominator of the fraction appearing in \eqref{eqn:thresh}
above are given by the numbers of
points in the two gray shaded regions of the figure (with nulls and
nonnulls both counted, since in practice we do not know which
features are null).

\begin{figure}

\includegraphics{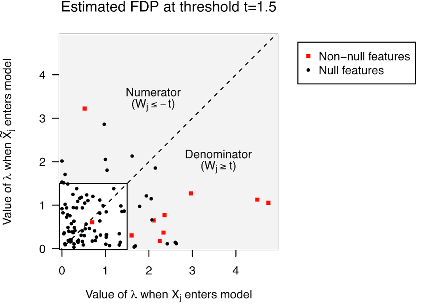}

\caption{Representation of the knockoff procedure plotting pairs
$(Z_j, \tilde Z_j)$. Black dots correspond to the null hypotheses
($\beta_j = 0$) while red squares are nonnulls ($\beta_j \neq0$).
Setting $t=1.5$, the number of points in the
shaded region below the diagonal is equal to $\#\{j : W_j \geq t\}$,
the number of selected variables at this threshold, while the number of
points in the shaded region above the diagonal is equal to $\#\{j :
W_j \leq- t\}$. Observe that the true signals (red squares) are primarily
below the diagonal, indicating $W_j>0$, while the null features (black
dots) are roughly symmetrically distributed across the diagonal.}
\label{fig:square}
\end{figure}

With these steps in place, we are ready to define our procedure:

\begin{definition}[(Knockoff)]\label{def:knockoff}
Construct $\Xproxy$ as in \eqref{eqn:construct}, and calculate
statistics $W_j$ satisfying the sufficiency and antisymmetry
properties [defined in Section~\ref{sec:mainresults}; \eqref{eqn:W}~above gives
an example of a statistic satisfying these properties]. Then select
the model
\[
\hat{S} = \{j: W_j\geq\thr\},
\]
where $\thr$ is the data-dependent threshold \eqref{eqn:thresh}.
(Note that $\hat{S}$ and $T$ both implicitly depend
on the choice of the target FDR level $q$.)
\end{definition}

A main result of this paper is that this procedure controls a
quantity nearly equal to the FDR:

\begin{theorem}\label{thm:knockoff}
For any $q\in[0,1]$, the knockoff method satisfies
\[
\mathbb{E}\biggl[{\frac{\#\{j:\beta_j=0\mbox{ and }j\in\hat{S}\}}{\#\{
j:j\in\hat{S}\} + q^{-1}}}\biggr] \leq q,
\]
where the expectation is taken over the Gaussian noise $\e$ in the
model \eqref{eqn:model},
while treating $\X$ and $\Xproxy$ as fixed.
\end{theorem}

The ``modified FDR'' bounded by this theorem is very close to the FDR
in settings where
a large number of features are selected (as adding $q^{-1}$ in the denominator
then has little effect), but it sometimes may be preferable to control
the FDR exactly. For this, we propose a slightly more conservative procedure:

\begin{definition}[(Knockoff$+$)]\label{def:knockoffplus}
Select a model as in Definition~\ref{def:knockoff} but with a
data-dependent threshold $\thr$ defined as
%
\begin{equation}
\label{eqn:threshplus} \thr=\operatorname{min} \biggl\{t\in\Wset :\frac{1+\#\{j:W_j \leq-t\}
}{\#\{j:W_j \geq t\}\vee1}\leq
q \biggr\}.
\end{equation}
\end{definition}

Note that the threshold $T$ chosen by knockoff$+$ is always
higher (or equal to) than that chosen in \eqref{eqn:thresh} by the
knockoff filter,
meaning that knockoff$+$ is (slightly) more conservative.

Our second main result shows that knockoff$+$ controls the FDR.

\begin{theorem}\label{thm:knockoffplus}
For any $q\in[0,1]$, the knockoff$+$ method satisfies
\[
\mathsf{FDR}=\mathbb{E}\biggl[{\frac{\#\{j:\beta_j=0\mbox{ and }j\in
\hat{S}\} }{\#\{j:j\in\hat{S}\}\vee1}}\biggr] \leq q,
\]
where the expectation is taken over the Gaussian noise $\e$ in model
\eqref{eqn:model},
while treating $\X$ and $\Xproxy$ as fixed.
\end{theorem}

We have explained why a large positive value of ${W}_j$ bears some
evidence against the null hypothesis $\beta_j = 0$, and now give a
brief intuition for how our specific choice of threshold allows
control of FDR (or of the modified FDR). The way in which
$\mathbf W$ is constructed implies that the signs of the $W_j$'s are
i.i.d. random for the ``null hypotheses,'' that is, for those $j$'s
such that $\beta_j=0$. Therefore, for any threshold $t$,
%
\begin{equation}
\label{eqn:eqd}\#\{j:\beta_j=0\mbox{ and }W_j \geq t\}
\eqd\#\{j:\beta_j=0\mbox{ and }W_j \leq-t\},
\end{equation}
where $\eqd$ means equality in distribution. In Figure~\ref
{fig:square}, for
instance, $\#\{j:\beta_j=0\mbox{ and }W_j \geq t\}$ is the number of null
points (black dots) in the shaded region below the diagonal, while
$\#\{j:\beta_j=0\mbox{ and }W_j \leq-t\}$ is the number of null points
in the shaded region above the diagonal. Note that the null points
are distributed approximately symmetrically across the diagonal, as
described by \eqref{eqn:eqd}.

Hence we can estimate
the false discovery proportion (FDP) at the threshold $t$ as
%
\begin{eqnarray}\label{eqn:FDP_intuition}
\frac{\#\{j:\beta_j=0\mbox{ and }W_j \geq t\}}{\#\{j:W_j \geq t\}
\vee
1} & \approx&\frac{\#\{j:\beta_j=0\mbox{ and }W_j \leq- t\}}{\#\{
j:W_j \geq t\}
\vee1}
\nonumber
\\[-8pt]
\\[-8pt]
\nonumber
 & \leq&\frac{\#\{j:W_j\leq- t\}}{\#\{
j:W_j\geq t\}\vee1}=: \widehat{\mathsf{FDP}}(t),
\end{eqnarray}
where $\widehat{\mathsf{FDR}}(t)$ is the knockoff estimate of FDP.
The knockoff procedure can be interpreted as finding a threshold via
$\thr= \operatorname{min} \{t\in\Wset: \widehat{\mathsf{FDP}}(t) \le q\}$,
with the convention that $\thr= + \infty$ if no such $t$ exists; this
is the most liberal threshold with the property that the estimated FDP
is under control. In fact, the inequality in \eqref{eqn:FDP_intuition}
will usually be tight because most strong signals will be selected
before their knockoff copies (in Figure~\ref{fig:square}, we see that
most of
the red squares lie below the diagonal, i.e., $W_j\geq0$); this means
that our estimate of FDP will probably be fairly tight unless the
signal strength is weak. (We will see later that the additional
``$+1$'' appearing in the knockoff$+$ method, yielding a slightly more
conservative procedure, is necessary both theoretically and
empirically to control FDR in scenarios where extremely few
discoveries are made.)

\subsection{Outline of the paper}
\label{sec:outline}

The rest of this paper is organized as follows:
\begin{itemize}
\item In Section~\ref{sec:mainresults}, we introduce the more general
form of
our variable selection procedure
and give some theoretical properties of the procedure that will allow
for FDR control.
\item In Section~\ref{sec:compare} we discuss some related methods and
strategies for FDR control for variable selection.
We compare our proposal to permutation-based methods, to the
Benjamini--Hochberg (BHq) procedure and some of its variants and to
other methods. In particular,
in Section~\ref{sec:simulations}, we present simulations to
demonstrate that
the method is effective in practice and performs well compared to the
BHq and related procedures.
\item In Section~\ref{sec:exper}, we present an application of the knockoff
method to real data
where the task is to find mutations in the HIV-1 protease or reverse
transcriptase that are associated
with drug resistance.
\item In Section~\ref{sec:sequential}, we move to a more general
problem of sequential
hypothesis testing, and we
show that our approach is an example of a procedure for
controlling FDR in a sequential hypothesis testing problem. Some proofs
are deferred to
the supplementary materials \cite{knockoffsupplement}.
\item In Section~\ref{sec:discussion} we close the paper with a discussion
outlining possible
extensions of this work.
\end{itemize}


\section{Knockoffs and FDR control}\label{sec:mainresults}

\subsection{The knockoff features}
\label{sec:proxy}

We begin with the construction of the knockoff features $\Xproxy_j$
and set $\bolds{\Sigma}=\X^\top\X$ as before. We first present the method
in the natural setting where $n \ge2p$ before presenting ways of
extending the construction to the range $p<n < 2p$. To ease
readability, vectors and matrices are boldfaced throughout the paper
whereas scalars are not.

\subsubsection{The natural setting \texorpdfstring{$n\ge2p$}{n>=2p}}\label{sec:2p}

As introduced earlier, the
matrix $\Xproxy$ obeys
%
\begin{equation}
\label{eqn:proxy2} \left[\matrix{\X&\Xproxy}\right]^\top
\left[\matrix{\X&\Xproxy}\right]= \left[\matrix{ \bolds{
\Sigma}&\bolds{\Sigma} - \diag\{\mathbf s\}\vspace*{2pt}
\cr
\bolds{\Sigma} -\diag
\{ \mathbf s\} &\bolds{\Sigma}} \right] := \mathbf G,
\end{equation}
where $\mathbf s\in\R^p$ is some vector. A necessary and sufficient
condition for $\Xproxy$ to exist is that $\mathbf G$ is positive
semidefinite. Indeed, by standard Schur complement calculations,
$\mathbf G\succeq\mathbf0$ if and only if $\diag\{\mathbf s\}\succeq
\mathbf0$
and $2\bolds\Sigma\succeq\diag\{\mathbf s\}$,
as claimed earlier. Now let $\tilde{\mathbf{U}}\in\R^{n\times p}$
be an
orthonormal matrix whose column space is orthogonal to that of $\X$ so
that $\tilde{\mathbf{U}}^\top\X=\mathbf{0}$: such a matrix exists
because $n
\ge2p$. A simple
calculation then shows that setting
%
\begin{equation}
\label{eqn:knockoffEC} \Xproxy=\X \bigl(\ident- \bolds{\Sigma }^{-1}\diag\{
\mathbf s\} \bigr) + \tilde{\mathbf{U}} \mathbf C
\end{equation}
gives the correlation structure specified in \eqref{eqn:proxy2}, where
$\mathbf C^\top\mathbf C = 2\diag\{\mathbf s\}-\diag\{\mathbf s\}
\bolds\Sigma
^{-1}\diag\{\mathbf s\}\succeq\mathbf0$.

Now that we understand the condition on $\mathbf s$ necessary for knockoff
features with the desired correlation structure to exist, it remains
to discuss which one we should construct, that is, to specify a choice
of $\mathbf s$. Returning to the example of the statistic from Section~\ref{sec:new},
we will have a useful methodology only if those variables that truly
belong to the model tend to be selected before their knockoffs as we
would otherwise have no power. Imagine that variable $\X_j$ is in the
true model. Then we wish to have $\X_j$ enter before $\Xproxy_j$. To
make this happen, we need the correlation between $\Xproxy_j$ and the
true signal to be small, so that $\Xproxy_j$ does not enter the Lasso
model early. In other words, we would like $\X_j$ and $\Xproxy_j$ to
be as orthogonal to each other as possible. In a setting where the
features are normalized, that is, $\Sigma_{jj}=1$ for all $j$, we would
like to have $\Xproxy_j^\top\X_j = 1-s_j$ as close to zero as
possible. Below, we consider two particular types of knockoffs:
\begin{itemize}
\item\textit{Equi-correlated knockoffs}: Here, $s_j =
2\lambda_{\operatorname{min}}(\bolds{\Sigma}) \wedge1$ for all $j$, so that
all the correlations take on the identical value
%
\begin{equation}
\label{eqn:set_s_constant} \langle\X_j, \Xproxy_j\rangle= 1- 2
\lambda_{\operatorname{min}}(\bolds{\Sigma}) \wedge1.
\end{equation}
Among all knockoffs with this equi-variant property, this
choice minimizes the value of $|\langle\X_j, \tilde{\X}_j\rangle|$.

\item\textit{SDP knockoffs}: Another possibility is to select knockoffs
so that the average correlation between an original variable and its
knockoff is minimum. This is done by solving the convex problem
%
\begin{equation}
\label{eqn:set_s_sum} \mbox{minimize}\quad \sum_j
(1-s_j) \quad\mbox{ subject to }\qquad 0\leq s_j \leq1, \diag\{
\mathbf s\} \preceq2\bolds\Sigma.
\end{equation}
This optimization problem is a highly structured, semidefinite program
(SDP), which can be solved very efficiently \cite{BoydBook}.
\end{itemize}
If the original design matrix $\X$ allows us to construct a knockoff
matrix $\Xproxy$ such that
the $s_j$'s are high (near $1$), the knockoff procedure will have high
power; if this is not
the case, then the power is likely to be lower.

\subsubsection{Extensions to $p<n<2p$}
\label{sec:p_or_2p}
When $n < 2p$, we can no longer find a subspace of dimension $p$ which
is orthogonal to $\X$, and so we cannot construct $\tilde{\mathbf
{U}}$ as
above. We can still use the knockoff filter, however, as long as the
noise level $\sigma$ is known or can be estimated. For instance,
under the Gaussian noise model \eqref{eqn:model}, we can use the fact
that the residual sum of squares from the full model is distributed as
$\|\y-\X\hat{\b}{}^{\mathsf{LS}}\|^2_2 \sim\sigma^2 \cdot
\chi^2_{n-p}$, where $\hat{\b}{}^{\mathsf{LS}}$ is the vector of\vspace*{1.5pt}
coefficients in a least-squares regression. Now letting
$\hat{\sigma}$ be our estimate of $\sigma$, draw a
$(2p-n)$-dimensional vector $\y'$ with
i.i.d. $\mathcal{N}(0,\hat{\sigma}^2)$ entries. If $n-p$ is large,
then $\hat{\sigma}$ will be an extremely accurate estimate of
$\sigma$, and we can proceed as though $\sigma$ and $\hat{\sigma}$
were equal. We then augment the response vector $\y$ with the new
$(2p-n)$-length vector~$\y'$, and augment the design matrix $\X$ with
$2p-n$ rows of zeros. Then approximately,
\[
\left[\matrix{ \y
\cr
\y'} \right] \sim\mathcal{N}\left(\left[
\matrix{\X
\cr
\mathbf0}\right] \b, \sigma^2\ident\right) .
\]
We now have a linear model with $p$ variables and
$2p$ observations, and so we can apply the knockoff filter to this row-augmented
data using the method described for the $n\geq2p$ setting.
We emphasize that since the knockoff matrix $\Xproxy$ is constructed
based only on the augmented original design matrix, that is, on
$\bigl[{\X\atop \mathbf0}\bigr]$,
it does not depend on the observed response $\y$.

In Section~\ref{sec:exper}, we analyze real HIV data with one case of
the form
$p<n < 2p$ and, thereby, show that the basic knockoff method can be
adapted to situations in which $n>p$ even if $n\ngeq2p$, so that
it applies all the way to the limit of model identifiability.

\subsection{Symmetric statistics}\label{sec:Wstats}

We next consider a statistic $W ([\X\ \Xproxy],\y) \in\R^p$
with large positive values of $W_j$, giving evidence that $\beta_j\neq
0$, and introduce two simple properties.

\begin{definition}
The statistic $\mathbf W$ is said to obey the \textit{sufficiency
property} if $\mathbf W$ depends only on the Gram matrix and on
feature-response inner products; that is, we can write
\[
\mathbf W = f \bigl([\X\ \Xproxy]^\top[\X\ \Xproxy],[\X\ \Xproxy
]^\top \y \bigr)
\]
for some $f: S_{2p}^+ \times\R^{2p} \rightarrow\R^p$, where
$S_{2p}^+$ is the cone of $2p \times2p$ positive semidefinite
matrices. (We call this the sufficiency property
since under Gaussian noise, $\X^\top\y$ is a sufficient statistic for
$\b$.)
\end{definition}

\begin{definition}
The statistic $\mathbf W$ is said to obey the \textit{antisymmetry
property} if swapping $\X_j$ and $\Xproxy_j$ has the effect of
switching the sign of $W_j$; that is, for any $S\subseteq\{1,
\ldots, p\}$,
\[
W_j \bigl([\X\ \Xproxy]_{\swap(S)},\y\bigr) = W_j
\bigl([\X\ \Xproxy], \y\bigr) \cdot\cases{ +1, &\quad $j\notin S$,\vspace*{2pt}
\cr
-1,&\quad
$j\in S$.}
\]
\end{definition}

Here, we write $[\X\ \Xproxy]_{\swap(S)}$ to mean that the columns $\X_j$
and $\Xproxy_j$ have been swapped in the matrix $[\X\ \Xproxy]$, for
each $j\in S$.
Formally, if $\mathbf V \in\R^{n\times2p}$ with columns
$\mathbf V_j$, then for each $j=1,\ldots,p$,
\[
(\mathbf V_{\swap(S)})_j = \cases{ \mathbf V_j, &\quad
$j\notin S$,\vspace*{2pt}
\cr
\mathbf V_{j+p}, & \quad$j\in S$, }\qquad (\mathbf
V_{\swap(S)})_{j+p}=\cases{ \mathbf V_{j+p}, &\quad $j\notin S$,
\vspace*{2pt}
\cr
\mathbf V_j,&\quad $j\in S$.}
\]
%

The statistic $\mathbf W$ we examined in Section~\ref{sec:new},
given in
equation \eqref{eqn:W}, obeys these two properties. The reason why the
sufficiency property holds is that the Lasso~\eqref{eqn:lasso} is
equivalent to
\[
\mbox{minimize}\quad \tfrac{1}{2} \mathbf{b}^\top\X^\top\X
\mathbf{b} - \mathbf{b}^\top\X^\top\y+ \lambda\|\mathbf{b}
\|_1,
\]
and thus depends upon the problem data $(\X, \y)$ through $\X^\top
\X$
and $\X^\top\y$ only.\footnote{If we would like to include an
intercept term in our model, that is, $\y=\beta_0\mathbf1 + \X\b+
\e$,
then the Lasso also depends on $\X^\top\mathbf1$ and $\y^\top
\mathbf
1$. In this case, we can apply our method as long as the knockoffs
additionally satisfy $\Xproxy^\top\mathbf1 = \X^\top\mathbf1$.}
Note that
the antisymmetry property in \eqref{eqn:W} is explicit. This is only one
example of a statistic of this type; other examples include the following:
\begin{longlist}[(1)]
\item[(1)] $W_j = |\X_j^\top\y| - |\Xproxy_j^\top\y|$, which
simply compares
marginal correlations with the response.\vadjust{\goodbreak}

\item[(2)] $W_j =|\hat{\beta}{}^{\mathsf{LS}}_j|-|\hat{\beta
}{}^{\mathsf{LS}}_{j+p}|$ or
$W_j=|\hat{\beta}{}^{\mathsf{LS}}_j|^2-|\hat{\beta}{}^{\mathsf{LS}}_{j+p}|^2$,
where $\hat{\b}{}^{\mathsf{LS}}$ is the least-squares solution
obtained by regressing $\y$ on the augmented design,
$\hat{\b}{}^{\mathsf{LS}} = ([\X\ \Xproxy]^\top[\X\ \Xproxy]
)^{-1}[\X\ \Xproxy]^\top\y$.

\item[(3)] Define $Z_j$ as in Section~\ref{sec:new}, $Z_j=\sup\{
\lambda:\hat
{\beta}_j(\lambda)\neq0\}$
for $j=1,\ldots,2p$ where $\hat{\b}(\lambda)$ is the solution to the
augmented
Lasso model regressing $\y$ on $ [\X\ \Xproxy]$.
We may then take $W_j= (Z_j\vee Z_{j+p} ) \cdot
\sign(Z_j-Z_{j+p})$,
but may also consider other options such as $W_j=Z_j - Z_{j+p}$,
or alternately we can take $W_j=|\hat{\beta}_j(\lambda)| - |\hat
{\beta}_{j+p}(\lambda)|$
for some fixed value of $\lambda$.
(To maintain consistency with the rest of this section,
the notation here is slightly different than in Section~\ref{sec:new}, with
$Z_{j+p}$ instead of $\tilde{Z}_j$ giving the $\lambda$ value when
$\Xproxy_j$ entered
the Lasso model.)

\item[(4)] The example above, of course, extends to all penalized likelihood
estimation procedures of the form
\[
\mbox{minimize}\qquad \tfrac{1}{2} \|\y- \X\mathbf{b}\|_2^2
+ \lambda P(\mathbf{b}),
\]
where $P(\cdot)$ is a penalty function, the Lasso being only one such
example. We can again define $\mathbf W$ by finding the $\lambda$ values
at which each feature enters the model, or by fixing $\lambda$ and comparing
coefficients in $\hat{\beta}$. In particular, we may consider methods that
use a nonconvex penalty, such as SCAD \cite{fan2001variable}, where
the nonconvexity of the penalty $P(\mathbf{b})$ reduces bias in estimating
the vector of coefficients.

\item[(5)] We can also consider a forward selection procedure
\cite{draper1981applied}: initializing the residual as $\mathbf r_0=\y$,
we iteratively choose variables via
\[
j_t = \arg\max_j \bigl|\langle{
\X_j},{\mathbf r_{t-1}}\rangle\bigr|
\]
and then update the
residual $\mathbf r_t$ by either regressing the previous residual
$\mathbf
r_{t-1}$ on $\X_{j_t}$ and taking the remainder, or alternately
using orthogonal matching pursuit~\cite{pati1993orthogonal}, where
after selecting $j_t$ we define $\mathbf r_t$ to be the residual of the
least square regression of $\y$ onto
$\{\X_{j_1},\ldots,\X_{j_t}\}$. As before, however, we apply this
procedure to the augmented design matrix $[\X\ \Xproxy]$. Next let
$Z_1,\ldots,Z_{2p}$ give the reverse order in which the $2p$
variables (the originals and the knockoffs) entered the model; that is,
$Z_j=2p$ if $\X_j$ entered first; $Z_j=2p-1$ if $\X_j$ entered
second, etc. The statistics $W_j= (Z_j\vee Z_{j+p} ) \cdot
\sign(Z_j-Z_{j+p})$ then reflect the time at which the original
variable $\X_j$ and the knockoff variable $\Xproxy_j$ entered the
model.

\item[(6)] Generalizing the forward selection procedure, we can consider
algorithms producing solution ``paths'' $\hat{\b}^{\lambda}$, where
the path
depends on the data only through $\X^\top\X$ and $\X^\top\y$.
Examples of such methods include LARS \cite{efron2004least}, the least
angle regression
method (closely related to the LASSO)
and MC$+$ \cite{zhang2010nearly}, a nonconvex method aimed at reducing
bias in the estimated
coefficients. Applying any such
method to the augmented design matrix $[\X\ \Xproxy]$, we would then
extract the order in which features enter the path, to determine the
values $Z_j$
and $Z_{j+p}$.
\end{longlist}
Clearly, the possibilities are endless.

\subsection{Exchangeability results}

Despite the fact that the statistics $W_j$, $j \in\{1, \ldots, p\}$,
are dependent and have marginal distributions that are complicated
functions of the unknown parameter vector $\bolds\beta$, our selection
procedure provably controls the false discovery rate, as stated in
Theorems \ref{thm:knockoff} and \ref{thm:knockoffplus} from
Section~\ref{sec:new}.
In this section, we establish a property of the statistics $W_j$
that we will use to prove our main results on FDR control.

In fact, the construction of the knockoff features and the symmetry of
the test
statistic are in place to achieve a crucial property, namely, that the
signs of the $W_j$'s are i.i.d. random for the ``null hypotheses'' and
furthermore
are independent from the magnitudes $|W_j|$ for all $j$, and from
$\sign(W_j)$ for the ``nonnull
hypotheses'' $j$.

\begin{lemma}[(i.i.d. signs for the nulls)]
\label{lem:key}
Let $\bolds\varepsilon\in\{\pm1\}^p$ be a sign sequence independent of
$\mathbf W$, with $\varepsilon_j=+1$ for all nonnull $j$ and
$\varepsilon_j\iidsim\{\pm1\}$ for null $j$. Then
\[
(W_1, \ldots, W_p) \eqd(W_1 \cdot
\varepsilon_1, \ldots, W_p \cdot\varepsilon_p).
\]
\end{lemma}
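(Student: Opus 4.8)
The plan is to exploit the swap operation $\swap(S)$ together with the two structural assumptions on $\bm{W}$. The key observation is that, conditionally on which coordinates are null, we want to show that flipping the signs of an arbitrary subset of the null $W_j$'s does not change the joint law of $\bm{W}$. Since an arbitrary sign pattern $\bm{\epsilon}$ is a mixture (over subsets $S$ of the null set) of the deterministic sign flips ``flip exactly the coordinates in $S$,'' it suffices to fix an arbitrary $S \subseteq \{j : \beta_j = 0\}$ and prove that
\[
\bigl(W_1,\dots,W_p\bigr) \;\eqd\; \bigl(W_1(\augmatrix_{\swap(S)},\y),\dots,W_p(\augmatrix_{\swap(S)},\y)\bigr),
\]
because by the antisymmetry property the right-hand side is exactly $\bm{W}$ with the signs of $\{W_j : j \in S\}$ flipped.

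The first step is to reduce everything to the sufficient statistic. By the sufficiency property, $\bm{W}$ is a function $f$ of $\bm{G} = \augmatrix^\top\augmatrix$ and $\augmatrix^\top\y$; swapping columns according to $S$ replaces the pair by $\bigl(\bm{G}_{\swap(S)}, (\augmatrix_{\swap(S)})^\top\y\bigr)$, where $\bm{G}_{\swap(S)}$ is $\bm{G}$ with the corresponding rows and columns permuted. The crucial point, which I would verify by a direct block computation, is that $\bm{G}$ is \emph{invariant} under the $\swap(S)$ permutation: from the block form in \eqnref{proxy2}, swapping $\X_j \leftrightarrow \Xproxy_j$ leaves the Gram matrix unchanged because the diagonal blocks are both $\bm\Sigma$ and the off-diagonal block $\bm\Sigma - \diag\{\bm s\}$ is symmetric, so permuting index $j$ with $j+p$ fixes every entry. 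Hence the only thing that changes is the vector of inner products $\augmatrix^\top\y$, whose $j$-th and $(j{+}p)$-th entries $(\X_j^\top\y, \Xproxy_j^\top\y)$ get interchanged for each $j \in S$.

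So it remains to show that, for $S$ contained in the null set, the random vector $\augmatrix^\top\y$ has the same distribution as its $\swap(S)$-permuted version. Here I would use the Gaussian model $\y \sim \mathcal{N}(\X\b, \sigma^2\ident)$, so $\augmatrix^\top\y \sim \mathcal{N}\bigl(\augmatrix^\top\X\b,\, \sigma^2 \bm{G}\bigr)$. The covariance is $\sigma^2\bm{G}$, which we just argued is $\swap(S)$-invariant, so we only need the mean $\augmatrix^\top\X\b$ to be $\swap(S)$-invariant. The $j$-th component is $\X_j^\top\X\b$ and the $(j{+}p)$-th is $\Xproxy_j^\top\X\b$; their difference is $(\X_j - \Xproxy_j)^\top\X\b = s_j \ee{j}^\top\b = s_j\beta_j$, using $\X^\top(\X - \Xproxy) = \diag\{\bm s\}$ from \eqnref{construct2}. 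For $j \in S$ we have $\beta_j = 0$, so these two mean components agree, and the swap leaves the mean vector fixed. Therefore $\augmatrix^\top\y \eqd (\augmatrix_{\swap(S)})^\top\y$, which combined with the invariance of $\bm{G}$ and the sufficiency property gives $\bm{W} \eqd \bm{W}(\augmatrix_{\swap(S)},\y)$; applying antisymmetry identifies the right side as the sign-flipped $\bm{W}$, and averaging over $S$ (equivalently, conditioning on the realization of $\bm\epsilon$ and noting independence from $\bm{W}$) yields the lemma.

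The step I expect to be the real (if modest) obstacle is the bookkeeping around the swap permutation acting simultaneously on rows and columns of $\bm{G}$ and confirming that every block identity in \eqnref{construct2}/\eqnref{proxy2} is exactly what forces $\swap(S)$-invariance — in particular being careful that it is the \emph{same} subset $S$ that must be null for the mean-invariance argument to go through, and that nothing breaks when $S$ is a proper (rather than full) subset of the nulls. Everything else is a short Gaussian computation. One should also note explicitly that the argument conditions on $\X$ and $\Xproxy$ as fixed, consistent with the statements of \thmref{knockoff} and \thmref{knockoffplus}, and that the independence of $\bm\epsilon$ from $\bm{W}$ in the statement is what lets us pass from the per-$S$ distributional identity to the mixed statement.
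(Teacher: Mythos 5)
Your proposal is correct and follows essentially the same route as the paper: it combines the sufficiency and antisymmetry properties with the swap-invariance of the Gram matrix and the Gaussian computation showing $\augmatrix_{\swap(S)}^\top\y \eqd \augmatrix^\top\y$ when $S$ contains only nulls (your identity $(\X_j-\Xproxy_j)^\top\X\b = s_j\beta_j$ is just the paper's mean-matching argument written via $\X^\top(\X-\Xproxy)=\diag\{\bm s\}$), then conditions on $\bm\epsilon$ to pass to the randomized statement. No gaps.
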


This property fully justifies our earlier statement \eqref{eqn:eqd} that
$\#\{j: \beta_j=0, W_j \le-t\}$ has the same distribution as $\#\{
j:\beta_j=0,
W_j \ge t\}$. Indeed, conditional on $|\mathbf W| = (|W_1|, \ldots,
|W_p|)$, both these random variables follow the same binomial
distribution, which implies that their marginal distributions are
identical. In turn, this gives that $\widehat{\mathsf{FDP}}(t)$ from
Section~\ref{sec:new} is an estimate of the true false discovery proportion
$\mathsf{FDP}(t)$.

The i.i.d. sign property for the nulls is a consequence of the two following
exchangeability properties for $\X$ and $\Xproxy$:

\begin{lemma}[(Pairwise exchangeability for the features)]\label{lem:pairs_cov}
For any subset $S\subset\{1, \ldots, p\}$,
\[
[\X\ \Xproxy]_{\swap(S)}^\top[\X\ \Xproxy]_{\swap(S)} = [\X\ \Xproxy]^\top[\X\ \Xproxy].
\]
That is, the Gram matrix of $[\X\ \Xproxy]$ is unchanged when we swap
$\X
_j$ and
$\Xproxy_j$ for each $j\in S$.
\end{lemma}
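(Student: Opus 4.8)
The plan is to realize the column swap as right-multiplication by a permutation matrix and then show that the Gram matrix $\bm G=\augmatrix^\top\augmatrix$ is fixed by the corresponding conjugation. Let $\bm P_S=\diag\{\one{1\in S},\dots,\one{p\in S}\}\in\R^{p\times p}$ be the diagonal $0$--$1$ matrix selecting the coordinates in $S$, and form the $2p\times 2p$ matrix
\[
\bm P=\begin{bmatrix}\ident_p-\bm P_S & \bm P_S\\ \bm P_S & \ident_p-\bm P_S\end{bmatrix}.
\]
One checks directly that $\augmatrix_{\swap(S)}=\augmatrix\,\bm P$ (column $j$ of $\bm P$ equals $\ee{j}$ if $j\notin S$ and $\ee{j+p}$ if $j\in S$, and symmetrically for column $j+p$), that $\bm P$ is symmetric, and that $\bm P^2=\ident$. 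Hence $\augmatrix_{\swap(S)}^\top\augmatrix_{\swap(S)}=\bm P^\top\bm G\,\bm P=\bm P\,\bm G\,\bm P$, and the lemma reduces to the algebraic identity $\bm P\,\bm G\,\bm P=\bm G$.

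To prove that identity I would split $\bm G$ from \eqnref{proxy2} into a piece whose four blocks are all equal to $\bm\Sigma$ and a correction supported on the off-diagonal blocks,
\[
\bm G=\begin{bmatrix}\bm\Sigma & \bm\Sigma\\ \bm\Sigma & \bm\Sigma\end{bmatrix}-\begin{bmatrix}\bm 0 & \diag\{\bm s\}\\ \diag\{\bm s\} & \bm 0\end{bmatrix}.
\]
The first matrix equals $(\bm 1_2\otimes\ident_p)\,\bm\Sigma\,(\bm 1_2\otimes\ident_p)^\top$, and since $\bm P(\bm 1_2\otimes\ident_p)=\bm 1_2\otimes\ident_p$, conjugation by $\bm P$ leaves it fixed. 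For the second matrix, the key fact is that $\diag\{\bm s\}$ commutes with $\bm P_S$ (both are diagonal) and $\bm P_S(\ident_p-\bm P_S)=\bm 0$; multiplying out $\bm P\cdot(\text{second matrix})\cdot\bm P$ block by block, every cross term that appears has the form $\bm P_S\,\diag\{\bm s\}\,(\ident_p-\bm P_S)$ or its transpose and hence vanishes, while the surviving terms recombine into the same matrix. Adding the two pieces back gives $\bm P\,\bm G\,\bm P=\bm G$.

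Equivalently, and perhaps more transparently for the writeup, one can argue entrywise: index the $2p$ columns of $\augmatrix$ by a ``feature index'' in $\{1,\dots,p\}$ together with a label recording which copy (original or knockoff) the column is; the swap $\swap(S)$ preserves the feature index and only toggles the copy label. If two columns carry distinct feature indices $j\neq k$, the corresponding entry of $\bm G$ is $\Sigma_{jk}$ regardless of the copy labels, precisely because $\bm\Sigma$ and $\bm\Sigma-\diag\{\bm s\}$ agree off the diagonal; since the swap fixes $j$ and $k$, the entry is unchanged. If two columns share a feature index $j$, the entry is $1$ (same copy) or $1-s_j$ (different copies), and the unordered pair of copies of $j$ is manifestly fixed by the swap. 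So every entry of $\bm G$ is preserved. There is no genuine obstacle in this lemma; the only point requiring care is exactly this bookkeeping --- that $\bm G$ differs from an obviously conjugation-invariant matrix only in entries $(a,b)$ whose feature indices coincide, which are precisely the entries the swap can permute among themselves. This exchangeability of the Gram matrix, together with the sufficiency and antisymmetry properties of $\bm W$, is what \lemref{key} will exploit.
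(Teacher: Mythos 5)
Your proof is correct and amounts to the same observation the paper makes: the paper's proof of \lemref{pairs_cov} is the one-line remark that the claim follows trivially from the definition of $\bm G = \augmatrix^\top\augmatrix$ in \eqnref{proxy2}, and your conjugation-by-$\bm P$ computation and entrywise bookkeeping are simply a careful spelling-out of that same fact. No gap; the extra detail is fine but not a different method.
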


\begin{pf} This follows trivially from the definition of $\mathbf G =
[\X\ \Xproxy]^\top[\X\ \Xproxy]$ in~\eqref{eqn:proxy2}.
\end{pf}

\begin{lemma}[(Pairwise exchangeability for the
response)]\label{lem:pairs}
For any subset $S$ of nulls,
\[
[\X\ \Xproxy]_{\swap(S)}^\top\y\eqd[\X\ \Xproxy]^\top\y.
\]
That is, the distribution of the product $[\X\ \Xproxy]^\top\y$ is
unchanged when we swap $\X_j$ and
$\Xproxy_j$ for each $j\in S$, as long as none of the swapped features
appear in the true model.
\end{lemma}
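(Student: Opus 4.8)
The plan is to use the fact that, with the design held fixed, both $\augmatrix^\top\y$ and $\augmatrix_{\swap(S)}^\top\y$ are Gaussian vectors: since $\y=\X\b+\e$ with $\e\sim\mathcal{N}(0,\sig^2\ident)$, we have $\augmatrix^\top\y=\augmatrix^\top\X\b+\augmatrix^\top\e$, an affine image of a Gaussian, and likewise for the swapped version. A multivariate Gaussian is determined by its mean and covariance, so it is enough to check that swapping changes neither. For the covariance this is immediate from \lemref{pairs_cov}: $\cov(\augmatrix_{\swap(S)}^\top\y)=\sig^2\augmatrix_{\swap(S)}^\top\augmatrix_{\swap(S)}=\sig^2\augmatrix^\top\augmatrix=\cov(\augmatrix^\top\y)$, with no hypothesis on $S$ needed. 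So the whole content of the lemma sits in the mean vector, and that is the only place where the assumption ``$S$ consists of nulls'' is used.

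To handle the mean, I would write the column swap as right-multiplication by a permutation matrix $\bm P=\bm P_S\in\R^{2p\times2p}$ that transposes indices $j$ and $j+p$ for each $j\in S$, so that $\augmatrix_{\swap(S)}=\augmatrix\bm P$, with $\bm P$ symmetric and $\bm P^2=\ident$. Writing $\X=\augmatrix\left[\begin{smallmatrix}\ident\\ \bm 0\end{smallmatrix}\right]$ gives $\augmatrix^\top\X\b=\bm G\bm\gamma$ with $\bm\gamma=(\b^\top,\bm 0^\top)^\top\in\R^{2p}$ and $\bm G=\augmatrix^\top\augmatrix$, and similarly $\augmatrix_{\swap(S)}^\top\X\b=\bm P^\top\bm G\bm\gamma=\bm P\bm G\bm\gamma$. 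Now \lemref{pairs_cov} is precisely the statement $\bm P\bm G\bm P=\bm G$, hence $\bm P$ and $\bm G$ commute, so $\bm P\bm G\bm\gamma=\bm G\bm P\bm\gamma$. Finally $\bm P\bm\gamma$ differs from $\bm\gamma$ only in the coordinates $j,j+p$ with $j\in S$, where it replaces $(\beta_j,0)$ by $(0,\beta_j)$; since each such $j$ is null, $\beta_j=0$ and thus $\bm P\bm\gamma=\bm\gamma$. Therefore $\augmatrix_{\swap(S)}^\top\X\b=\bm G\bm\gamma=\augmatrix^\top\X\b$, and combined with the covariance computation this proves $\augmatrix_{\swap(S)}^\top\y\eqd\augmatrix^\top\y$.

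There is no serious obstacle; the only step that requires a moment's attention is seeing where the null assumption enters the mean computation. Equivalently, and perhaps more transparently without the permutation-matrix bookkeeping, one can observe directly that for $j\in S$ the $j$th and $(j+p)$th entries of $\augmatrix^\top\X\b$ are $\X_j^\top\X\b=(\bm\Sigma\b)_j$ and $\Xproxy_j^\top\X\b=((\bm\Sigma-\diag\{\bm s\})\b)_j$, which differ by $s_j\beta_j=0$; so swapping these two entries does nothing, while every other entry of $\augmatrix^\top\X\b$ is manifestly left unchanged by $\swap(S)$ — again reducing the claim to the equality of two Gaussians with matching mean and (by \lemref{pairs_cov}) matching covariance.
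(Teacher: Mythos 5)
Your proof is correct and follows essentially the same route as the paper: both argue that $\augmatrix_{\swap(S)}^\top\y$ is Gaussian, that the covariance matches by \lemref{pairs_cov}, and that the mean is unchanged because the off-diagonal correlations agree and $\beta_j=0$ for $j\in S$. Your permutation-matrix bookkeeping (and the direct entrywise remark at the end, which is literally the paper's computation) is just a tidier packaging of that same argument.
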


\begin{pf}
Since $\y\sim\mathcal{N}(\X\b,\sig^2\ident)$, for any $S'$, we have
\[
[\X\ \Xproxy]_{\swap(S')}^\top\y\sim N \bigl([\X\ \Xproxy]_{\swap
(S')}^\top
\X\b,\sig^2 [\X\ \Xproxy]_{\swap(S')}^\top[\X\ \Xproxy]
_{\swap(S')} \bigr).
\]
Next we check that the mean and variance calculated here are the
same for $S'=S$ and for $S'=\varnothing$. Lemma~\ref{lem:pairs_cov} proves
that the variances are equal. For the means, since $\X_j^\top
\X_i=\Xproxy_j^\top\X_i$ for all $i\neq j$, and
$\operatorname{support}(\b)\cap S=\varnothing$, we see that $\X_j^\top\X
\b
=\Xproxy_j^\top\X\b$ for all $j\in S$, which is sufficient.
\end{pf}

\begin{pf*}{Proof of Lemma~\ref{lem:key}}
For any set $S\subset\{1, \ldots, p\}$, let $\mathbf{W}_{\swap(S)}$ be
the statistic we would get if we had replaced $[\X\ \Xproxy]$ with
$[\X\ \Xproxy]_{\swap(S)}$ when calculating $\mathbf W$. The
anti-symmetry property gives
\[
\mathbf{W}_{\swap(S)} = (W_1 \cdot\varepsilon_1,
\ldots, W_p \cdot\varepsilon_p), \qquad\varepsilon_j
= \cases{ +1, & \quad$j \notin S$,\vspace*{2pt}
\cr
-1, &\quad $j \in S$.}
\]
Now let $\bolds\varepsilon$ be as in the statement of the lemma, and let
$S=\{j:\varepsilon_j=-1\}$. Since $S$ contains only nulls, Lemmas
\ref{lem:pairs_cov} and \ref{lem:pairs} give
\begin{eqnarray*}
\mathbf{W}_{\swap(S)}&=&f \bigl([\X\ \Xproxy]_{\swap(S)}^\top(\X\ \Xproxy)_{\swap(S)},[\X\ \Xproxy]_{\swap(S)}^\top\y \bigr)
\\
&\eqd &f \bigl([\X\ \Xproxy]^\top[\X\ \Xproxy],[\X\ \Xproxy]^\top\y
\bigr) = \mathbf W.
\end{eqnarray*}
This proves the claim.
\end{pf*}

\subsection{Proof sketch for main results}
\label{sec:sketch}

With the exchangeability property of the $W_j$'s in place, we sketch
the main ideas behind the proof of our main results,
Theorems \ref{thm:knockoff} and \ref{thm:knockoffplus}.
The full details will be presented later, in Sections \ref
{sec:sequential} and
in the supplementary materials \cite{knockoffsupplement},
where we will see that our methods can be
framed as special cases of a sequential hypothesis testing procedure.
Such sequential procedures are not specifically about the regression
problem we consider here, and this is the reason why we prefer
postponing their description as not to distract from the problem at
hand.

We restrict our attention to the knockoff$+$ method for simplicity. To
understand how knockoff$+$ controls FDR, we consider step 3 of the
method, where after calculating the statistics $W_j$, we choose the
data-dependent threshold $\thr$ given by \eqref{eqn:threshplus}.
By definition, the FDP is equal to
%
\begin{eqnarray} \label
{eqn:FDPbound}
\mathsf{FDP} &=& \frac{\#\{j:\beta_j=0\mbox{ and
}W_j\geq T\}}{\#\{j:W_j\geq T\}\vee1}
\nonumber
\\
& \leq&\frac{1+\#\{j:W_j\leq-T\}}{\#\{j:W_j\geq T\}\vee1} \cdot \frac{\#\{j:\beta_j=0\mbox{ and }W_j\geq T\}}{1+\#\{j:\beta_j=0
\mbox{ and }W_j\leq-T\}}
\\
&\leq& q \cdot\frac{\#\{j:\beta_j=0\mbox{ and
}W_j\geq T\}}{1+\#\{j:\beta_j=0 \mbox{ and }W_j\leq-T\}};\nonumber
\end{eqnarray}
the first inequality follows from the fact that $\#\{j:\beta_j=0
\mbox{ and }W_j\leq-T\} \le\#\{j: W_j\leq-T\}$ and the second from
the definition of $T$.
Since $T$ is the first time a ratio falls below $q$, it turns out that
we may view $T$ as a stopping time. In fact, the main step of our
proof is to show that $T$ is a stopping time for the supermartingale
$V^+(T)/(1+V^-(T))$, where $V^\pm(t) = \# \{j \mbox{ null } : |W_j|
\ge t \mbox{ and } \mathsf{sign}(W_j) = \pm1\}$; the details are deferred
to the supplementary materials \cite{knockoffsupplement}.
By the optional stopping time theorem,
\[
\mathbb{E}\biggl[{\frac{V^+(T)}{1+V^-(T)}}\biggr] \le\mathbb{E}\biggl[{
\frac
{V^+(0)}{1+V^-(0)}}\biggr] = \mathbb{E}\biggl[{\frac{V^+(0)}{1+p_0 - V^+(0)}}\biggr] \le1,
\]
%
where the last step comes from a property of the binomial distribution
proved in the supplementary materials \cite{knockoffsupplement}; note
that since
$\mathsf{sign}(W_j)\iidsim\{\pm1\}$ for the null features $j$, then
$V^+(0)$ is distributed as a $\mathsf{Binomial}(p_0,1/2)$ random
variable. (For the purposes of this proof sketch, we assume here that
$W_j\neq0$ for all $j$ for simplicity.) This, together with
\eqref{eqn:FDPbound}, proves FDR control.
The proof for the knockoff method is similar, and we refer the
reader to Section~\ref{sec:sequential} and to the supplementary
materials \cite
{knockoffsupplement}
for details.

\section{Comparison with other variable selection techniques}
\label{sec:compare}

There are of course many other variable selection techniques, based on
ideas from Benjamini and Hochberg or perhaps based on permuted designs
rather than knockoffs, which may be designed with the goal of keeping
FDR under control. In this section, we review some of these procedures
and compare some of them empirically.

\subsection{Comparing to a permutation method}\label{sec:permutation}

To better understand the ideas behind our method, we next ask whether
we could have constructed the matrix of knockoff features $\Xproxy$
with a simple permutation. Specifically, would the above results hold
if instead of constructing $\Xproxy$ as above, we use a matrix
$\X^{\pi}$, with entries given by
\[
\X^{\pi}_{i,j}=\X_{\pi(i),j}
\]
for some randomly chosen permutation
$\pi$ of the sample indices $\{1,\ldots,n\}$? In particular, the matrix
$\X^{\pi}$ will always satisfy
$
\X^{\pi}{}^\top\X^{\pi}= \X^\top\X$,
and so the permuted covariates display the same correlation structure
as the original covariates, while breaking association with the
response $\y$ due to the permutation.

Permutation methods are widely used in applied research. While they
may be quite effective under a global null, they may fail to yield
correct answers in cases other than the global null; see also
\cite{chung2013,romano2013} for other sources of problems associated
with permutation methods. Consequently, inference in practical
settings, where some signals do exist, can be quite distorted. In the
linear regression problem considered here, a permutation-based
construction can dramatically underestimate the FDP in cases where
$\X$ displays nonvanishing correlations. To understand why, suppose
that the features $\X_j$ are centered. Then $\X^\top\X=\X^{\pi}
{}^\top\X^{\pi}=\bolds\Sigma$,
but $\X^\top\X^{\pi}\approx\mathbf0$.
In particular,
the exchangeability results (Lemmas \ref{lem:pairs_cov} and \ref{lem:pairs})
will not hold for the augmented matrix $[\X\ \X^{\pi}]$, and this can lead
to extremely poor control of FDR.

To see this empirically, consider a setting with positive correlation between
features. We generate each row of $\X\in\R^{300\times100}$
i.i.d. from a $\mathcal{N}(\mathbf0,\bolds\Theta)$ distribution, where
$\Theta_{ii}=1$ for all $i$ and $\Theta_{ij}=0.3$ for all $i\neq
j$. We then center and normalize the columns of $\X$ and define
%
\begin{equation}
\label{eqn:perm_example}\y= 3.5\cdot(\X_1 + \cdots+ \X_{30}) + \e\qquad
\mbox{where }\e\sim\mathcal{N}(\mathbf{0}, \ident_n).
\end{equation}
Next we fit the Lasso path \eqref{eqn:lasso} for the response $\y$ and
the augmented
design matrix $[\X\ \X^{\pi}]$. Figure~\ref{fig:perm_fail} shows that while
many of the original null features enter the model at moderate
values of $\lambda$, the permuted features do not enter the Lasso path until
$\lambda$ is extremely small; the difference arises from the fact that
only the original
null features are correlated with the signals $\X_1,\ldots,\X_{30}$.
In other words, the $\X^{\pi}_j$'s are not good knockoffs of the $\X_j$'s
for the nulls $j=31,\ldots,100$---they behave very differently in the
Lasso regression.

Next we test the effect of these issues on FDR control.
Using the permuted features $\X^{\pi}$ in place of $\Xproxy$, we
proceed exactly as for the knockoff method
[see \eqref{eqn:W} and Definition~\ref{def:knockoff}] to select a
model. We compare
to the knockoff method and obtain
the following FDR, when the target
FDR is set at $q=20\%$:\vspace*{12pt}

{\fontsize{9}{11}\selectfont
\begin{center}
\begin{tabular*}{220pt}{@{\extracolsep{\fill}}lc@{}}
\hline
&\textbf{FDR over 1000 trials}\\
& \textbf{(nominal level $\bolds{q=20\%}$)}\\\hline
Knockoff method&12.29\%\\
Permutation method&45.61\%\\
\hline
\end{tabular*}\vspace*{12pt}
\end{center}}
\noindent We note that there are many
ways in which the permuted features $\X^{\pi}$ may be used to try to
estimate or control FDR,
but in general such methods will suffer from similar issues arising
from the lack of correlation
between the permuted and the original features.

\begin{figure}

\includegraphics{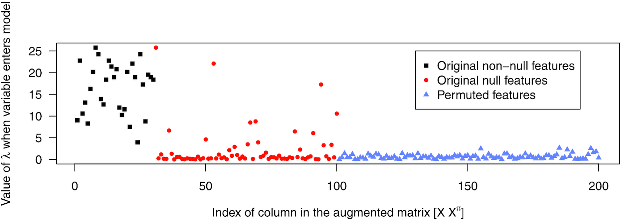}

\caption{Results of the Lasso path, with simulated data
specified in \protect\eqref{eqn:perm_example}. Many
of the null features $\X_j$ for $j=31,\ldots,100$ enter the Lasso model
earlier (i.e., at higher values of $\lambda$) than most of the
permuted features,
leading to loss of FDR control.}
\label{fig:perm_fail}
\end{figure}

\subsection{The Benjamini--Hochberg procedure and variants}\label{sec:BH}

The Benjamini--Hochberg (BHq) procedure \cite{BH95} is a hypothesis
testing method known to control FDR under independence. Given z-scores
$Z_1,\ldots,Z_p$ corresponding to $p$ hypotheses being tested so that
$Z_j \sim\mathcal{N}(0,1)$ if the $j$th hypothesis is null, the
procedure\footnote{We present BHq in the notation from
\cite{Storey02} and use z-scores rather than $p$-values to
facilitate comparison with our methods. Section~4 of \cite{Storey02} proves
that this procedure is equivalent to BHq.} rejects a hypothesis
whenever $|Z_j|\geq\thr$, where $\thr$ is a data-dependent threshold
given by
%
\begin{eqnarray}
\label{eqn:Bhq_intro} \thr=\operatorname{min} \biggl\{t:\frac{p \cdot
\mathbb{P}\{{|\mathcal{N}(0,1)| \geq t}\}}{ \#\{j: |Z_j| \geq t\}
}\leq q \biggr\}
\nonumber
\\[-8pt]
\\[-8pt]
\eqntext{\mbox{(or $\thr=+\infty$ if this set is empty)},}
\end{eqnarray}
for a desired FDR
level $q$. Note that for any $t$, the number of null hypotheses with
$|Z_j|\geq t$ can be estimated by $\pi_0p\cdot
\mathbb{P}\{{|\mathcal{N}(0,1)|\geq t}\}$, where $\pi_0p$ is the
total number of
null hypotheses. For $\pi_0<1$, then the fraction in the definition of
\eqref{eqn:Bhq_intro} is an overestimate of the FDP by a factor of
$(\pi_0)^{-1}$; see \cite{EfronBook} and references therein.

Turning to the problem of variable selection in regression, the BHq
procedure may be applied by calculating the least-squares estimate,
\[
\hat{\b}{}^{\mathsf{LS}}= \bigl(\X^\top\X \bigr)^{-1}
\X^\top\y.
\]
For Gaussian noise as in \eqref{eqn:model}, these fitted coefficients
follow a $\mathcal{N}(\b,\sigma^2 \bolds\Sigma^{-1})$ distribution,
where we
recall that $\bolds\Sigma= \X^\top\X$. Therefore, setting
$Z_j = \hat{\beta}{}^{\mathsf{LS}}_j/\break \sigma\sqrt{(\bolds\Sigma^{-1})_{jj}}$
yields z-scores, that is, marginally $Z_j\sim\mathcal{N}(0,1)$ whenever
$\beta_j=0$. Variables are then selected using the data-dependent
threshold given in \eqref{eqn:Bhq_intro}.

Under orthogonal designs in which $\X^\top\X$ is a diagonal matrix,
the $Z_j$'s are independent; in this setting, Benjamini and Hochberg
\cite{BH95} prove that the BHq procedure controls FDR at the level
$\pi_0\cdot q$; see Section~\ref{sec:BHq_ortho} for a comparison of
knockoff methods
with BHq in the orthogonal design setting.
Without the assumption of independence, however,
there is no such guarantee. In fact, it is not hard to construct
designs with only two variables such that the BHq procedure does not
control the FDR at level $q$. FDR control has been established for
test statistics obeying the positive regression dependence on a subset
property (PRDS) introduced in \cite{BY01}. The problem is that the
PRDS property does not hold in our setting, for two reasons. First,
for one-sided tests
where the alternative is $\beta_j > 0$, say, one would reject for
large values of $Z_j$. Now for the PRDS property to hold we would need
to have $(\bolds\Sigma^{-1})_{ij} \ge0$ for all nulls $i$ and all $j$,
which rarely is in effect. Second, since the signs of the coefficients
$\beta_j$
are in general unknown, we are performing two-sided tests where we
reject for large values of $|Z_j|$ rather than $Z_j$; these
absolute-value statistics
are not known to be PRDS either, even under positive values of $(\bolds
\Sigma^{-1})_{ij}$.\vadjust{\goodbreak}

Against this background, Benjamini and Yekutieli \cite{BY01} show
that the BHq procedure yields FDR bounded by $\pi_0q\cdot S(p)$
regardless of the dependence among the z-scores, where
$S(p)=1+1/2+\cdots+1/p\approx\log p+0.577$.
Therefore, if we define $\thr$ as in \eqref{eqn:Bhq_intro} but with
$q/S(p)$ in place of $q$, then we are again guaranteed a bound on FDR.

Finally, as another option, we can ``whiten the noise'' in $\hat{\b}$
before applying BHq. Specifically, let $\mathbf Z' \sim\mathcal
{N}(0,\sigma^2\cdot
(\lambda_0^{-1}\ident- \bolds\Sigma^{-1}))$ be drawn independently from
the data, where $\lambda_0=\lambda_{\operatorname{min}}(\bolds\Sigma)$. Then
%
\begin{equation}
\label{eqn:whiten}\hat{\b}+Z' \sim\mathcal{N} \bigl(\b,
\sigma^2 \lambda_0^{-1} \ident \bigr),
\end{equation}
and we can then apply BHq to the z-scores given by $Z_j = (\hat{\b
}_j+Z'_j)/\sigma\sqrt{\lambda_0}$.
Since
these z-scores are now independent, applying BHq yields an FDR of at
most $\pi_0 q$.

For all of the variants of BHq considered here, FDR control is
estimated or guaranteed to be at a level of $\pi_0q $, which is lower
than the nominal level $q$; that is, the method is more conservative than
desired. However, here we are primarily interested in a sparse setting
where $\pi_0\approx1$, and so this will not have a strong effect.

\subsection{Empirical comparisons with the Benjamini--Hochberg method
and variants}\label{sec:simulations}

We now test our method\footnote{Code for the knockoff method
is available via the \textsc{R} package \textsc{knockoff},
at \url{http://cran.r-project.org/web/packages/knockoff/},
and the \textsc{Matlab} package \textsc{knockoff\_matlab},
at \url{http://web.stanford.edu/\textasciitilde
candes/Knockoffs/package_matlab.html}.}
in a range of settings, comparing it to BHq and
its variants, and examining the effects of sparsity level, signal
magnitude and feature correlation.

\subsubsection{Comparing methods}\label{sec:setting1}
We begin with a comparison of seven methods: the equi-variant and the
SDP constructions for the knockoff and\break knockoff$+$ filters, the BHq
procedure, the
BHq procedure with the log-factor correction \cite{BY01} to guarantee
FDR control with dependent z-scores [i.e., this applies BHq with
$q/S(p)$ replacing $q$] and the BHq procedure
with whitened noise, as in \eqref{eqn:whiten}. To summarize our earlier
discussions, the equi-variant and SDP constructions for knockoff$+$, the
BHq method with the log-factor correction, and the BHq method with
whitened noise are all guaranteed to control FDR at the nominal level
$q$; the other methods do not offer this exact guarantee.

\begin{table}
\tabcolsep=0pt
\caption{FDR and power in the setting of Section \protect\ref{sec:setting1}
with $n = 3000$ observations, $p = 1000$ variables and $k = 30$
variables in the model with regression coefficients of magnitude 3.5.
Bold face font highlights those methods that are known theoretically
to control FDR at the nominal level $q = 20\%$}
\label{tab:setting1}
\begin{tabular*}{\textwidth}{@{\extracolsep{\fill}}lccc@{}}
\hline
&\textbf{FDR (\%)}&&\textbf{Theoretical}\\
&\textbf{(nominal level}&&\textbf{guarantee}\\
\textbf{Method}& \multicolumn{1}{c}{$\bolds{q=20\%}$\textbf{)}}&\textbf{Power (\%)}&\textbf{of FDR control?}\\
\hline
\textbf{Knockoff$\bolds{+}$ (equivariant construction)}&\textbf{14.40}&\textbf{60.99}&
\textbf{Yes}\\
Knockoff (equivariant construction) & 17.82 &66.73&No\\
\textbf{Knockoff$\bolds{+}$ (SDP construction)} &\textbf{15.05}&\textbf{61.54}&\textbf{Yes}\\
Knockoff (SDP construction) & 18.72 &67.50&No\\[3pt]
Benjamini--Hochberg (BHq) \cite{BH95} & 18.70 &48.88&No\\
\textbf{BHq $\bolds{+}$ log-factor correction \cite{BY01}} & \phantom{0}\textbf{2.20}
&\textbf{19.09}&\textbf{Yes}\\
\textbf{BHq with whitened noise} &\textbf{18.79}& \phantom{0}\textbf{2.33}&\textbf{Yes}\\
\hline
\end{tabular*}
\end{table}

In this simulation, we use the problem size $n=3000$, $p=1000$ and a
number $k = 30$ of variables in the model. We first draw
$\X\in\R^{n\times p}$ with i.i.d. $\mathcal{N}(0,1)$ entries, then
normalize its
columns. Next, to define $\b$, we choose $k=30$ coefficients at
random and choose $\beta_j$ randomly from $\{\pm A\}$ for each of the
$k$ selected coefficients, where $A=3.5$ is the signal
amplitude. Finally, we draw $\y\sim\mathcal{N}(\X\b, \ident)$.
The signal
amplitude $A=3.5$ is selected because $3.5$ is approximately the
expected value of $\max_{1 \le j \le p} |\X_j^\top\mathbf z|$ where
$\mathbf
z \sim\mathcal{N}(\mathbf0,\ident)$ (each $\X_j^\top\y$ is
approximately a
standard normal variable if $\beta_j=0$). Setting the signal
amplitude to be near this maximal noise level ensures a setting where
it is possible, but not trivial, to distinguish signal from noise.

Table~\ref{tab:setting1} displays the resulting FDR and power obtained by
each method, averaged over $600$ trials.
Empirically, all of the methods result in an FDR that is near or below
the nominal level $q=20\%$. Comparing
their power, knockoff and knockoff$+$ (power $>60\%$) significantly
outperform BHq (power $\approx49\%$).

Comparing the equi-variant and SDP constructions for the knockoff and
knockoff$+$ methods, the SDP construction achieves slightly higher power
for both knockoff and knockoff$+$. Finally, the two variants of BHq
considered do offer theoretical control of FDR, but empirically
achieve very poor power in this simulation. From this point on,
we thus restrict our attention to three methods: the knockoff method using
the SDP construction given in \eqref{eqn:set_s_sum}, the knockoff$+$ method
with the same SDP construction and BHq.

\subsubsection{Effect of sparsity level, signal amplitude and feature
correlation}\label{sec:setting2}

\begin{figure}

\includegraphics{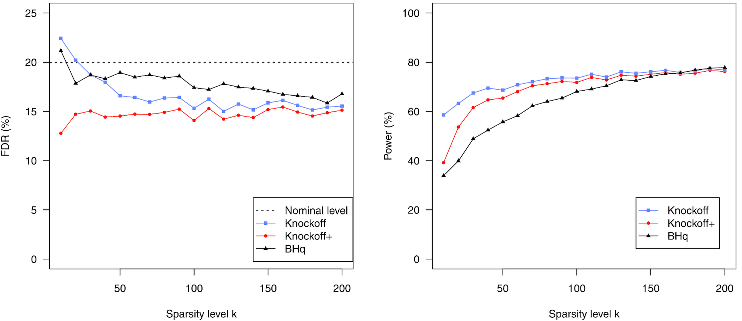}

\caption{Testing the knockoff, knockoff$+$, and BHq methods at nominal
level $q = 20\%$ with varying sparsity level $k$. Here $n=3000$,
$p=1000$ and $A=3.5$, and the figures show mean FDR and mean power
averaged over $600$ trials.}
\label{fig:vary_k}
\end{figure}

\begin{figure}[b]

\includegraphics{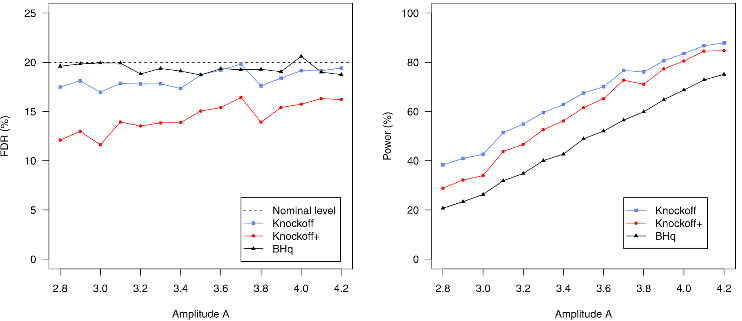}

\caption{Testing the knockoff, knockoff$+$ and BHq methods at nominal
level $q = 20\%$ with varying signal amplitudes $A$. Here $n=3000$,
$p=1000$ and $k=30$, and the figures show mean FDR and mean power
averaged over $200$ trials.}
\label{fig:vary_amp}
\end{figure}

\begin{figure}

\includegraphics{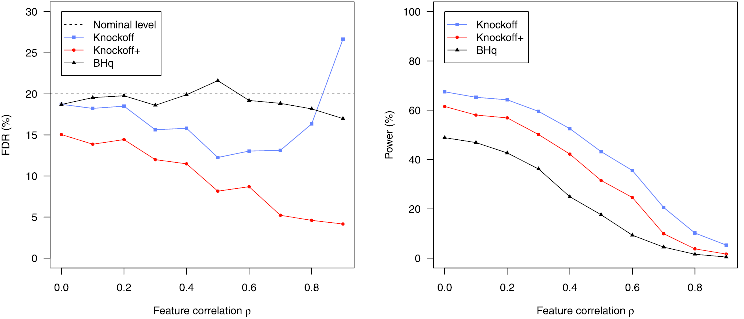}

\caption{Testing the knockoff, knockoff$+$ and BHq methods at nominal
level $q = 20\%$ with varying feature correlation levels. The
correlation parameter $\rho$ controls the tapered correlation
structure of the design matrix, where columns $\X_j$ and $\X_k$ are
generated from a distribution with correlation $\rho^{|j-k|}$. Here
$n=3000$, $p=1000$, $k=30$, $A=3.5$ and the figures show mean FDR
and mean power averaged over $200$ trials.}
\label{fig:vary_rho}
\end{figure}

Next, we consider the effect of varying the sparsity level $k$, the
signal amplitude $A$
or the feature correlation level, when comparing the performance for
the knockoff, knockoff$+$,
and BHq methods. We test each of these questions separately as follows:
\begin{itemize}
\item Effect of the sparsity level $k$: we test values $k= 10, 20,
30,\ldots, 200$
while fixing signal amplitude $A=3.5$ (all other settings are identical
to Section~\ref{sec:setting1}).
\item Effect of signal amplitude $A$: we test values $A=2.8,2.9,\ldots
,4.2$ while
fixing sparsity level $k=30$ (all other settings are identical to
Section~\ref{sec:setting1}).
\item Effect of feature correlation: we
generate the rows of $\X$ from a $\mathcal{N}(0,\bolds\Theta_{\rho})$
distribution, where $(\bolds\Theta_{\rho})_{jk}=\rho^{|j-k|}$,
for correlation level $\rho=0,0.1,\ldots,0.9$. (In the case that
$\rho=0$, we simply set $\bolds\Theta=\ident$, as before.)
We then normalize the columns of $\X$ and generate $\b$ and $\y$
in the same manner as in Section~\ref{sec:setting1} with sparsity
level $k=30$
and signal amplitude $A=3.5$.
\end{itemize}
The mean FDR and mean power over
$200$ trials are displayed in Figures \ref{fig:vary_k}, \ref{fig:vary_amp}
and~\ref{fig:vary_rho}, respectively.

Examining these results, we see that across the three experiments, all
three methods
successfully control FDR at the nominal level $q=20\%$, with one
notable exception: for the correlated
design, the knockoff method controls FDR for $\rho\leq0.8$, but
shows a higher FDR level of $26.67\%$ when $\rho=0.9$. This is
consistent with our theoretical result, Theorem~\ref{thm:knockoff},
which guarantees that the knockoff method controls a modified form of
the FDR that is very similar to the FDR when a high number
of variables are selected, but may be quite different when a small
number of variables is selected. At the
higher values of $\rho$, we make so few discoveries (typically less
than $5$ for both knockoff and knockoff$+$) that the additional
``$+1$'' appearing in the knockoff$+$ method makes a substantial difference.

Turning to the power of the three methods, we see that both knockoff and
knockoff$+$ offer as much or more power than BHq across all settings,
with a strong advantage
over BHq at low and moderate values
of $k$ across the range of signal amplitude levels and correlation
levels---these
methods successfully leverage the sparse structure of
the model in this high-dimensional setting. In the study of sparsity
level (Figure~\ref{fig:vary_k}),
for higher values of $k$, when the
problem is no longer extremely sparse, the power of BHq catches up
with the knockoff and knockoff$+$ methods. As expected, each method shows
lower power
with high correlations, reflecting the difficulty
of telling apart neighboring features that are strongly correlated, and
at low signal amplitude
levels.

In summary, we see that the knockoff and knockoff$+$ methods have higher
power than
BHq while having a lower type I error. In the language of multiple
testing, this says that these methods detect more true effects while
keeping the
fraction of false discoveries at a lower level, which makes findings
somehow more reproducible.

\subsection{Relationship with the Benjamini--Hochberg procedure under
orthogonal designs}
\label{sec:BHq_ortho}
%

\begin{figure}
\centering
\begin{tabular}{@{}cc@{}}

\includegraphics{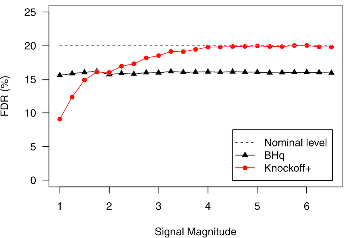}
 & \includegraphics{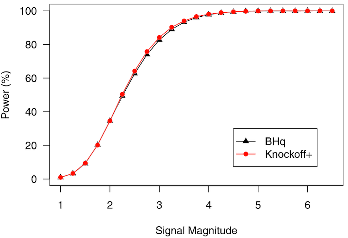}\\
\footnotesize{(a) FDR} & \footnotesize{(b) Power}
\end{tabular}
\caption{FDR and power of the BHq and knockoff$+$ methods, plotted
against the size $A$ of the regression coefficients (signal
magnitude), averaged over $1000$ trials. The nominal FDR level $q$
is set to 20\%. The $2000 \times1000$ design $\X$ is orthogonal,
and the number of true signals is 200 so that the fraction of nulls
is $\pi_0 = 0.8$.}\label{fig:orth}
\end{figure}
%
%
%
%

The Benjamini--Hochberg (BHq) procedure is known to control FDR in the
setting where the statistics for the null hypotheses are mutually
independent. In the regression setting where our statistics arise from
the least-squares coefficients $\hat{\b}\sim
\mathcal{N}(\b,\sigma^2(\X^\top\X)^{-1})$ (as in Section~\ref
{sec:BH}), the
coefficients of $\hat{\b}$ are mutually independent if and only if
$\X^\top\X$ is a diagonal matrix---the orthogonal design setting.
In this section, we consider an orthogonal design and compare the
knockoff filter and BHq side-by-side to understand the similarities and
differences in how these methods work. Figure~\ref{fig:orth}
demonstrates this
comparison empirically over a range of signal amplitude levels. Note
the contrasting FDR behavior between the two methods, even though the
power is essentially identical in each setting. In particular, we see
the following:
\begin{longlist}[(1)]
\item[(1)] The two methods both control FDR (as guaranteed by the theory)
and achieve nearly identical power over a range of signal
amplitudes.
\item[(2)] Theoretically and empirically, regardless of signal amplitude,
the FDR of BHq is given by $\pi_0 q$, where $q=20\%$ is the nominal
FDR level and $\pi_0$ is the proportion of null hypotheses,
$\pi_0=\frac{p-k}{p}$, as is shown in \cite{BH95}.
\item[(3)] In contrast, the FDR of the knockoff method varies over the
range of signal amplitudes. When the signal amplitude is high
enough for the power to be substantial, the FDR of the knockoff
method approaches $q$, rather than $\pi_0 q$; that is, the knockoff
method is implicitly correcting for the proportion of nulls, and
achieving the target FDR level. When the signal amplitude is so low
that power is near zero, the knockoff method has an extremely low
FDR (far lower than the nominal level~$q$), which is desirable in a
regime where we have little chance of finding the true signals.
\end{longlist}
%
A theoretical explanation of
these observations is given in the supplementary materials \cite
{knockoffsupplement}.

\subsection{Other methods}
Finally, we briefly mention several other approaches that are related
to the goal of this work. First, we discuss two methods presented
Miller \cite{MillerBook,Miller84} to control false positives in
forward selection
procedures for linear regression. The first method
creates ``dummy'' variables whose entries are drawn
i.i.d. at random. The forward selection procedure is then applied to
the augmented list of variables and is run until the first time it selects
a dummy variable. This approach is similar
in flavor to our method, but the construction of the dummy
variables does not account for correlation among the existing
features and therefore may lose FDR control in a correlated
setting. 
Miller \cite{Miller84} also proposes a second method, which makes use
of a key observation that we also use extensively in our work: after
selecting $m$ variables, if all of the
true features have already been selected, then the remaining residual is
simply Gaussian noise and is therefore
rotationally invariant. To test whether the next feature
should be included, \cite{Miller84} thus
compares to the null distribution obtained by applying random
rotations. 
In practice, true features and null features are nearly always
interspersed in the forward selection steps, and so this type of method
will not achieve exact control of the FDR for this reason.

We next turn to recent work by G'Sell et al. \cite{g2013false}, which
also gives an FDR-controlling procedure for the Lasso, without
constructing additional variables. This work uses the results from
\cite{lockhart2012significance,taylor2014post} that study the distribution
of the sequence of $\lambda$ values where new null variables enter the
model after all of the true signals have already been included in the
model. Consequently, the sequential testing procedure of
\cite{g2013false} controls FDR under an important assumption: the true
features must all enter the model before any of the null
features. Therefore, this method faces the same difficulty as the
second method of Miller \cite{Miller84} discussed above; when signals
and null features are interspersed along the Lasso path (as is
generally the case in practice, even when the nonzero regression
coefficients are quite large), this assumption is no longer satisfied,
leading to some increase of the FDR.

Next, the stability selection approach
\cite{larrywasserman2010stability,meinshausen2010stability} controls
false variable selection in the Lasso by refitting the Lasso model
repeatedly for subsamples of the data, and then keeps only those
variables that appear consistently in the resulting Lasso models.
These methods control false discoveries effectively
in practice and give theoretical guarantees of asymptotically consistent
model selection. For a finite-sample setting, however, there is no known
concrete theoretical guarantee for controlling false discoveries (with the
exception of a special case treated in Theorem~1 in \cite
{meinshausen2010stability}, which
for a linear model, reduces to the equi-variant
setting, $\Sigma_{ij}=\rho$ for all $i\neq j$).
Furthermore, these methods require computing the path of
Lasso models for many subsampled regressions containing $p$ candidate
variables each; in contrast, our method requires only a single
computation of the Lasso path, although for a model with $2p$
variables.

Finally, recent work \cite
{javanmard2013confidence,berk2012valid,van2013asymptotically,zhang2014confidence,voorman2014inference}
extends the classical
notions of confidence intervals and
$p$-values 
into the high-dimensional setting ($p\gg n$); although the
coefficients of the linear model are no longer identifiable in the
classical sense, these works perform inference under various
assumptions about the design and sparsity in the model.

\section{Experiment on real data: HIV drug resistance}\label{sec:exper}

We apply the knockoff filter to the task of detecting mutations in the
Human Immunodeficiency Virus Type 1 (HIV-1) that are associated with
drug resistance.\footnote{Data available online at
\url
{http://hivdb.stanford.edu/pages/published_analysis/genophenoPNAS2006/}. Code
for reproducing the analysis and figures in this section is provided
with the \textsc{knockoff\_matlab} package
for \textsc{Matlab},
available at \url
{http://web.stanford.edu/\textasciitilde candes/Knockoffs/package_matlab.html}.}
The data set, described and analyzed in
\cite{rhee2006genotypic},
consists of drug resistance measurements and genotype information from
samples of HIV-1, with separate data sets for resistance to protease
inhibitors (PIs), to nucleoside reverse transcriptase (RT) inhibitors
(NRTIs) and to nonnucleoside RT inhibitors (NNRTIs). The data set
sizes are as follows:\vspace*{12pt}

{\fontsize{9}{11}\selectfont
\begin{center}
\tabcolsep=0pt
\begin{tabular*}{\textwidth}{@{\extracolsep{\fill}}lcccc@{}}
\hline
 &  & & \textbf{\# protease or RT}
&\textbf{\# mutations
appearing} \\
\textbf{Drug type}&\textbf{\# drugs}&\textbf{Sample size}&
\textbf{positions genotyped}&$\bolds{\geq\!3}$ \textbf{times in sample}\\
\hline
PI&7&846&\phantom{0}99&209\\
NRTI&6&634&240&287\\
NNRTI&3&745&240&319\\
\hline
\end{tabular*}\vspace*{12pt}
\end{center}}
\noindent In each drug class, some samples are missing
resistance measurements for some of the drugs, so for each drug's
analysis, the sample size and the number of mutations present are
slightly smaller than given in the table; we report the final $n$ and $p$
for each drug in Figures \ref{fig:HIV_PI}, \ref{fig:HIV_NRTI} and
\ref{fig:HIV_NNRTI}
on a case-by-case basis.

We analyze each drug separately. The response $y_i$ is given
by the log-fold-increase of lab-tested drug resistance in the $i$th
sample, while the design matrix $\X$ has entries $X_{ij}\in\{0,1\}$,
indicating presence or absence of mutation \#$j$ in the $i$th sample.
(For each drug, we keep only those mutations appearing $\geq\!3$ times
in the sample for that drug, and we remove duplicated columns from $\X
$ to allow
for identifiability.)
Different mutations at the same position
are treated as distinct features, and we assume an additive linear
model with no interactions. We then apply knockoff and BHq, each with
$q=20\%$, to
the resulting data set. One of the drugs has a sample size
$n$ with $p<n<2p$, in which case
we use the method described
in Section~\ref{sec:p_or_2p} which extends the knockoff method
beyond the original construction for the $n\geq2p$
regime; see Figures \ref{fig:HIV_PI}, \ref{fig:HIV_NRTI} and \ref
{fig:HIV_NNRTI}
for the values of $n$ and~$p$.

\begin{figure}

\includegraphics{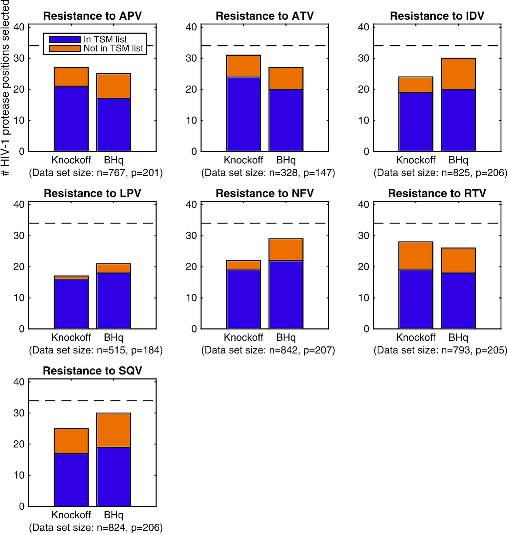}

\caption{Results of applying the knockoff filter and BHq with $q=20\%$ to
model PI-type drug resistance of HIV-1 based on genetic mutations
using data from \cite{rhee2006genotypic}. For each PI-type treatment
and for each of the three methods,
the bar plots show
the number of positions on the HIV-1 protease where mutations were
selected. To
validate the selections of the methods, dark blue
indicates protease positions that appear in the treatment-selected
mutation (TSM) panel
for the PI class of treatments,
given in Table~1 of \cite{rhee2005hiv},
while orange indicates positions selected by the method that do not
appear in the TSM list.
The horizontal line indicates the total number of HIV-1 protease
positions appearing in the TSM list.
Note that the TSM list consists of mutations that are associated with
the PI class of drugs in general,
and is not specialized to the individual drugs in the class.}
\label{fig:HIV_PI}
\end{figure}

\begin{figure}

\includegraphics{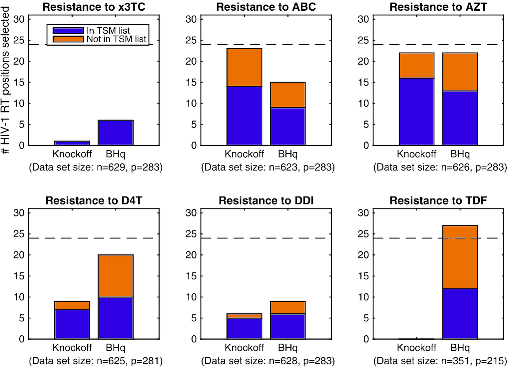}

\caption{Same as Figure \protect\ref{fig:HIV_PI}, but for the NRTI-type
drugs, validating
results against the treatment-selected mutation (TSM) panel for NRTIs given
in Table~2 of \cite{rhee2005hiv}.}
\label{fig:HIV_NRTI}
\end{figure}

\begin{figure}[b]

\includegraphics{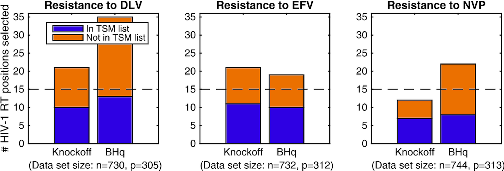}

\caption{Same as Figure \protect\ref{fig:HIV_PI}, but for the NNRTI-type
drugs, validating
results against the treatment-selected mutation (TSM) panel for NNRTIs given
in Table~3 of \cite{rhee2005hiv}.}\label{fig:HIV_NNRTI}
\end{figure}

To evaluate the results, we compare the selected mutations with
existing treatment-selected mutation (TSM) panels \cite{rhee2005hiv};
since this is a real data experiment, the ground truth is unknown, but
these panels provide a good approximation that we can use to assess
the methods. For each drug class (PIs, NRTIs, NNRTIs), Rhee et al.
\cite{rhee2005hiv} create panels of mutations that are present at
significantly higher frequency (after correcting for multiple
comparisons) in virus samples from individuals who have been treated
with that class of drug, as compared to individuals that have never
received that class of drug. Therefore the data we use for model
selection (based on lab-tested drug resistance) and the mutation
panels used to validate our results (based on association with patient
treatment history) come
from different types of studies, and we aim to see replicability; that
is, we will evaluate our model selection results based on how many of
the mutations identified by our analysis appear also in the TSM lists.
It is known that multiple mutations at the same protease or RT
position can often be associated with related drug-resistance
outcomes. Since the TSM lists
are an approximation of the ground truth, we will compare only
the positions of the mutations selected, with the positions of
mutations on the TSM lists.

Results for the PI, NRTI and NNRTI type drugs are displayed in
Figures \ref{fig:HIV_PI}, \ref{fig:HIV_NRTI} and~\ref{fig:HIV_NNRTI},
respectively. We see that both methods perform similarly for most of
the drugs in the three classes, with good agreement in most cases
between the positions of the selected mutations based on the
lab-tested drug resistance data, and the TSM lists which are based on
patient history data. Overall, the knockoff filter shows slightly
better agreement with the TSM lists as compared to BHq, but there is
variability in the outcomes across the different drugs. In summary we
see that the FDR-controlling knockoff methods indeed select variables
that mostly correspond to real (replicable) effects, as verified by
the independently created TSM lists.

\subsection{Simulation under non-Gaussian noise with a sparse real
design matrix}\label{sec:HIV_simulation}

To verify that the knockoff method is robust to non-Gaussian noise, we
test the method (and compare to BHq) using partially-simulated data.
First, we take the design matrix
$\X\in\R^{747\times319}$ from the NNRTI-drug data set discussed
above. We generate a coefficient
vector $\beta\in\R^{319}$ by randomly choosing a support $S$ of size $k=20$
and drawing $\beta_j\sim3.5 \cdot N(0,1)$ for each $j\in S$ (and
$\beta_j=0$ for $j\notin S$).

To obtain a realistic non-Gaussian distribution on the noise, we
consider the empirical distribution
given by $\mathcal{P}_{{\X}}^{\perp}({\y^{(l)}})$ where $\y^{(l)}$
is the response for
the $l$th NNRTI-type drug for $l=1,2,3$
(discarding rows of $\X$ if entries of $\y^{(l)}$ are missing). This
empirical noise distribution
is heavy tailed. (Its sample excess kurtosis is equal to $9.24$, which
is the population
excess kurtosis for a $t$ distribution with $4.65$ degrees of freedom.)
To generate the data, we set
$\y= \X\beta+ {\mathbf z}$
where the entries $z_i$ of $\mathbf z$ are sampled with replacement
from the empirical noise distribution after rescaling to ensure that
$\mathbb{E}[{z_i^2}]=1$.
The following results show that the knockoff,
knockoff$+$ and BHq procedures each exhibit good FDR control and power:\vspace*{12pt}

{\fontsize{9}{11}\selectfont
\begin{center}
\begin{tabular*}{\textwidth}{@{\extracolsep{\fill}}lcc@{}}
\hline
&\textbf{FDR over 1000 trials} & \textbf{Power over 1000 trials}\\
& \textbf{(nominal level} $\bolds{q=20\%}$\textbf{)}\\
\hline
Knockoff&25.72\%&63.50\%\\
Knockoff$+$&20.31\%&60.67\%\\
BHq&25.47\%&69.42\%\\
\hline
\end{tabular*}\vspace*{12pt}
\end{center}}
\noindent We note that this setting, where the columns of $\X$ are sparse and
the noise is heavy tailed,
is an extremely challenging scenario for the knockoff procedure to
maintain FDR control. For instance, even with no signals present ($\b
=0$), the marginal statistic
$\X_j^\top\y=\X_j^\top\mathbf z=\sum_i X_{ij} z_i$ follows a highly
non-Gaussian\vspace*{1pt} distribution; the
central limit theorem cannot be applied
to this sum because $\X_j$ is highly sparse, and the $z_i$'s come from
a heavy-tailed
distribution.
Nonetheless, we see approximate FDR control in the results of this simulation.

\section{Sequential hypothesis testing}\label{sec:sequential}

\subsection{Two sequential testing procedures}

In this section, we describe several related sequential hypothesis testing
procedures, along with theoretical results for FDR control. We then
relate these procedures to the knockoff and knockoff$+$ methods, in order
to prove our main results, Theorems \ref{thm:knockoff} and \ref
{thm:knockoffplus}.

Imagine that $p_1,\ldots,p_m$ are $p$-values giving information about
hypotheses $H_1,
\ldots,
H_m$. These $p$-values
obey $p_j\geqd\mathsf{Unif}[0,1]$ for all null $j$;
that is, for all null $j$ and all $u \in[0,1]$, $\mathbb{P}\{{p_j \le
u}\} \le
u$.
We introduce two sequential strategies, which control the FDR at any
fixed level $q$ under a usual independence property.

\textit{Sequential step-up procedure (SeqStep)}. Fix any
threshold $c\in(0,1)$ and any subset\footnote{In many applications
we would typically choose $K=[m]$, but we allow for
$K\subsetneq[m]$ to help with the proof of the regression method.}
$K$, and define
\[
\hat{k}_{0} = \max \biggl\{k\in K: \frac{\#\{j\leq k:p_j>c\}}{
k\vee1}\leq(1-c)\cdot q
\biggr\}
\]
and
\[
\hat{k}_{1} = \max \biggl\{k\in K: \frac{1 + \#\{j\leq k:p_j>c\}}{1 +
k}\leq(1-c)\cdot q
\biggr\},
\]
with the convention that we set $\hat{k}_{0/1} =0$ if the set is
empty: here $\hat{k}_{0/1}$ should be read as ``$\hat{k}_0$ or
$\hat{k}_1$'' since we can choose which of the two definitions above
to use. We then reject $H_j$ for all $j\leq\hat k_{0/1}$, and thus
get two distinct procedures named SeqStep (using $\hat k_0$)
and SeqStep$+$ (using $\hat k_1$) hereafter.

To understand why such a sequential procedure makes sense, consider
SeqStep, and assume that the null $p$-values are
i.i.d. $\mathsf{Unif}[0,1]$. Then
\[
\frac{\#\{\mbox{null } j\leq k\}}{
k\vee1} \approx\frac{1}{1-c} \cdot\frac{\#\{\mbox{null } j\leq
k: p_j>c\}} {
k\vee1} \le
\frac{1}{1-c} \cdot\frac{\#\{j\leq k: p_j>c\}} {
k\vee1}
\]
so that again, the procedure maximizes the number of rejections under
the constraint that an estimate of FDR is controlled at level
$q$. SeqStep$+$ corrects SeqStep to guarantee FDR control.

\textit{Selective sequential step-up procedure (Selective SeqStep)}.
Alternatively,
define
\[
\hat{k}_{0/1}=\max \biggl\{k\in K: \frac{0/1+\#\{j\leq k:p_j>c\}}{
\#\{j\leq k: p_j\leq c\}\vee1}\leq
\frac{1-c}{c}\cdot q \biggr\},
\]
with
the convention that we set $\hat{k}_{0/1} = 0$ if this set is
empty. (We get two distinct procedures named Selective SeqStep and
Selective SeqStep$+$ by
letting the term in the numerator be 0 or 1.) Then reject $H_j$ for all
$j\leq\hat{k}_{0/1}$
such that $p_j\leq c$. Strictly speaking, these are not sequential testing
procedures (because among the first $\hat{k}_{0,1}$
hypotheses in the list, we reject only those satisfying the selective
threshold $p_j\leq c$),
and we are thus abusing terminology.

Again, to understand this procedure intuitively when the null
$p$-values are
i.i.d. $\mathsf{Unif}[0,1]$, we see that
\begin{eqnarray*}
\frac{\#\{\mbox{null } j\leq k : p_j \le c\}}{
\#\{j\leq k: p_j\leq c\}\vee1}& \approx&\frac{c}{1-c} \cdot\frac{\#\{
\mbox{null } j\leq k : p_j>c\}} {
\#\{j\leq k: p_j\leq c\}\vee1}\\
& \le&
\frac{c}{1-c} \cdot\frac{\#\{j\leq k: p_j>c\}} {
\#\{j\leq k: p_j\leq c\}\vee1}
\end{eqnarray*}
so that again, the procedure maximizes the number of rejections under
the constraint that an estimate of FDR is controlled at level
$q$.

\begin{theorem}\label{thm:seq}
Suppose that the null $p$-values are i.i.d. with $p_j \geq
\mathsf{Unif}[0,1]$, and are independent from the nonnulls.
For each procedure considered, let
$V$ be the number of false discoveries and $R$
the total number of discoveries:
\begin{itemize}
\item Both SeqStep$+$ and Selective SeqStep$+$ control the FDR, that is,
$
\mathbb{E}[{\frac{V}{ R\vee1}}] \le q$.
\item Selective SeqStep controls a modified FDR,
$
\mathbb{E}[{\frac{V}{R+({c}/{(1-c)}) q^{-1} }}] \le q$.

\item SeqStep also controls a modified FDR,
$
\mathbb{E}[{\frac{V}{R+({1}/{(1-c)}) q^{-1}}}] \le q$.
\end{itemize}
\end{theorem}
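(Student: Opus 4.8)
The plan is to prove the theorem via a martingale (more precisely, a supermartingale) argument that is run backward along the sequence $k=m,m-1,\dots,1$, mirroring exactly the proof sketch given in Section~\ref{sec:sketch} for the knockoff+ method but phrased in the abstract $p$-value language. Write $V(k)=\#\{\text{null }j\le k: p_j>c\}$ and $A(k)=\#\{\text{null }j\le k: p_j\le c\}$ for the counts of null $p$-values above and below the threshold among the first $k$ hypotheses, and let $m_0$ be the total number of nulls. The key observation is that, because the null $p$-values are i.i.d.\ and each independently lands above $c$ with probability $\ge 1-c$ (this is where $p_j\geqd\mathsf{Unif}[0,1]$ is used, together with independence from the non-nulls), the process obtained by scanning the nulls from the last index down to the first has a tractable conditional structure: given the total number of nulls with $p_j\le c$ and with $p_j>c$ among $j\le k$, the identity of which nulls have large $p$-values is uniformly distributed, so each ``peel off the largest remaining null'' step behaves like removing a coordinate without replacement from an urn. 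I would therefore introduce the filtration $\filt_k$ generated (going backward) by the positions and above/below-$c$ status of all null $p$-values with index $>k$, together with all information about the non-null hypotheses, and check that the relevant ratio is a supermartingale with respect to $\{\filt_k\}_{k=m}^{0}$.

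The main technical lemma, which I would isolate and prove first, is the urn/binomial inequality already invoked in the sketch: if $Y\sim\mathsf{Binomial}(n,\rho)$ with $\rho\ge 1/2$ then $\EE{(1+n-Y)^{-1}\cdot(\text{something})}$-type bounds hold; concretely, for the "$1$" variants we need $\EE{\frac{1+(n-Y)}{1+Y}}$-style control giving the constant $1$, and the classical fact $\EE{(1+Y)^{-1}}\le (n\rho+1)^{-1}\cdot$(stuff) for a $\mathsf{Binomial}(n,\rho)$ variable. Precisely, I would show that for $M_k := \dfrac{1/1+V(k)}{1+A(k)}$ (FSTP1) and $M_k:=\dfrac{1/1+V(k)}{A(k)\vee 1}$ (SSTP1), and their $0$-analogues, the backward process $M_{\hat k}$ stopped at the appropriate data-dependent index is dominated in expectation by its initial ($k=m$, i.e.\ "all nulls present") value. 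The initial value is then exactly a function of a $\mathsf{Binomial}(m_0,\rho)$ random variable $V=V(m)$ with $\rho=\PP{p>c}\ge 1-c$, namely of the form $\EE{\frac{1+V}{1+(m_0-V)}}$ or $\EE{\frac{c}{1-c}\cdot\frac{\cdots}{\cdots}}$, and the binomial lemma bounds this by $1$ (for the "+1" procedures) or by $\frac{c}{1-c}q^{-1}\cdot(\cdots)$ controlled appropriately (for the "$0$" procedures, where the extra additive constant in the denominator of the modified FDR is precisely what absorbs the slack).

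Next I would verify that $\hat k_{0/1}$ is a stopping time for the backward filtration: the definition of $\hat k_{0/1}$ as the largest $k$ with $\frac{0/1+\#\{j\le k:p_j>c\}}{\#\{j\le k:p_j\le c\}\vee 1}\le\frac{1-c}{c}q$ (and analogously for FSTP) depends only on the $p$-values with index $\le k$ for the event $\{\hat k_{0/1}\ge k\}$... — one must be a little careful here, since "largest $k$" is not obviously a backward stopping time, but the standard trick (used implicitly in the sketch for knockoff+) is that scanning $k=m,m-1,\dots$ and stopping the \emph{first} time the displayed inequality holds does identify $\hat k_{0/1}$, and that first-hitting time \emph{is} a stopping time for $\{\filt_k\}$ because the event depends on $p_1,\dots,p_k$ only. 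Then the Optional Stopping Theorem for supermartingales gives $\EE{M_{\hat k_{0/1}}}\le\EE{M_m}$, and combining with the binomial lemma yields the stated bounds. Finally, the passage from the martingale bound to the FDR statement is the short algebraic manipulation already displayed in Section~\ref{sec:sketch}: write $\frac{V}{R\vee 1}=\frac{V}{0/1+V'}\cdot\frac{0/1+V'}{R\vee 1}$ where $V'$ counts the complementary nulls, bound the second factor by $\frac{1-c}{c}q$ (FSTP: by $(1-c)q$ times the appropriate ratio) using the \emph{definition} of $\hat k_{0/1}$, and bound the expectation of the first factor using the supermartingale/binomial argument.

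\textbf{Main obstacle.} I expect the crux to be the supermartingale verification for the backward filtration — specifically, showing that removing the largest-indexed remaining null decreases (in conditional expectation) the ratio $\frac{1+V(k)}{1+A(k)}$ (resp.\ with $A(k)\vee 1$), using only the exchangeability of the null above/below-$c$ labels and the fact that $\rho\ge 1-c\ge 1/2$ when $c\le 1/2$ — and, relatedly, correctly handling the case $\rho\ge 1/2$ versus general $c$, since for $c>1/2$ one has $\rho\ge 1-c$ but possibly $\rho<1/2$, which is exactly why the FSTP0/SSTP0 modified-FDR constants come out as $\frac{1}{1-c}q^{-1}$ and $\frac{c}{1-c}q^{-1}$ rather than the clean $q^{-1}$; getting these constants to match requires tracking the factor $(1-c)$ (resp.\ $c/(1-c)$) through the binomial inequality rather than crudely using $\rho\ge 1/2$. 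The rest — stopping-time measurability, optional stopping, and the final algebra — is routine once the supermartingale property and the sharp binomial moment bound are nailed down.
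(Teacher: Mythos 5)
Your skeleton is the paper's own: a backward filtration recording above/below-$c$ information about the nulls, a supermartingale plus optional stopping, a binomial moment bound at $k=m$, and a final factorization invoking the definition of $\hat k$. But several concrete ingredients, as you state them, are wrong and would block the execution. First, your key lemma is written upside-down: the process that works is $M(k)=\#\{\text{null }j\le k:p_j\le c\}\big/\bigl(1+\#\{\text{null }j\le k:p_j>c\}\bigr)$, i.e.\ nulls \emph{below} $c$ in the numerator---which is exactly the first factor $V/(1+V')$ of your own final decomposition---whereas the quantity you name as $M_k$, namely $\bigl(0/1+V(k)\bigr)\big/\bigl(1+A(k)\bigr)$ with $V(k)$ the nulls above $c$, is not the one your last step needs (and with the extra ``$+1$'' in the numerator it is not even clearly a supermartingale, since $1/(1+A(k))$ grows as $k$ decreases). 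Second, the binomial lemma does not give the constant $1$: after stochastic domination, the value at $k=m$ is bounded via $\EE{Y/(1+N-Y)}\le c/(1-c)$ for $Y\sim\mathsf{Binomial}(N,c)$, and the compensating factor $\frac{1-c}{c}q$ (resp.\ $(1-c)q$ for FSTP) comes from the definition of $\hat k$, not from the binomial bound. Third, the filtration must contain the \emph{aggregate} below/above-$c$ counts of nulls among $j\le k'$ for every $k'\ge k$ (not merely the statuses of the nulls with index $>k$), since otherwise neither $M(k)$ nor the event $\{\hat k\ge k\}$ is measurable with respect to $\filt_k$.

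Relatedly, your declared ``main obstacle'' is a red herring: nothing in the argument requires $c\le 1/2$ or $\rho\ge 1/2$. The supermartingale step uses only exchangeability of the null labels given the counts, and $\EE{Y/(1+N-Y)}\le c/(1-c)$ holds for every $c\in(0,1)$. The constants $\frac{c}{1-c}q^{-1}$ and $\frac{1}{1-c}q^{-1}$ for the ``$0$'' procedures have nothing to do with where $c$ sits relative to $1/2$; they come from the algebra forced by the missing ``$+1$'' in the estimated FDP. For instance, for SSTP0 the definition of $\hat k_0$ gives $\#\{j\le\hat k_0:p_j>c\}\le\frac{1-c}{c}\,qR$, hence $\bigl(1+\#\{\text{null }j\le\hat k_0:p_j>c\}\bigr)\big/\bigl(\tfrac{c}{1-c}q^{-1}+R\bigr)\le\frac{1-c}{c}\,q$, and the martingale bound then yields the modified-FDR statement; the analogous manipulation with $\hat k_0$ in the denominator produces the $\frac{1}{1-c}q^{-1}$ constant for FSTP0. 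With the ratio flipped, the constant corrected from $1$ to $c/(1-c)$, the filtration enlarged to include the running counts, and the modified-FDR constants traced to the threshold definitions rather than to the value of $c$, your outline coincides with the paper's proof.
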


As is clear from the assumption, the order of the $p$-values cannot be
dependent on the $p$-values themselves---for instance, we cannot
reorder the $p$-values from smallest to largest, apply this procedure
and expect FDR control.

\subsection{Connection with knockoffs}
\label{sec:theorem1}

Interestingly, the knockoff method can be cast as a special case of the second
sequential hypothesis testing procedure, and the FDR controlling
properties are then just a consequence of Theorem~\ref{thm:seq}. We explain
this connection, thereby proving Theorems \ref{thm:knockoff} and
\ref{thm:knockoffplus}.

Let $m=\#\{j:W_j\neq0\}$; since our method never selects variable $j$
when $W_j=0$,
we can ignore such variables.
Assume without loss of generality that $|W_1|\geq|W_2|\geq\cdots\geq
|W_m|>0$, and set
\[
p_j = \cases{ 1/2, &\quad $W_j > 0$,\vspace*{2pt}
\cr
1, &\quad
$W_j < 0$,}
\]
which can be thought of as 1-bit $p$-values. It then follows from
Lemma~\ref{lem:key} that the null $p$-values are i.i.d. with $\mathbb
{P}\{{p_j = 1/2}\} = 1/2 = \mathbb{P}\{{p_j = 1}\}$ and are
independent from the others,
thereby obeying the assumptions of Theorem~\ref{thm:seq}. Setting $K$
to be the
indices of the strict inequalities,
\[
K = \bigl\{k\in[m]: |W_k|>|W_{k+1}| \bigr\} \cup\{m\},
\]
one sees that the knockoff method is now
equivalent to the second sequential testing procedure on these
$p$-values. To see why this true, set $c = 1/2$, and observe that for
any $k\in K$,
\[
\frac{0/1+\#\{j\leq k:p_j>1/2\}}{ \#\{j\leq k: p_j\leq1/2\}\vee1} = %
\frac{0/1+\#\{j: W_j \leq-|W_k|\}}{\#\{j: W_j \geq
|W_k|\}\vee1}.
\]
%
Hence, finding the largest $k$ such that the ratio in the left-hand
side is below $q$ is the same as finding the smallest $|W_k|$ such
that the right-hand side is below $q$,
which is equivalent to calculating the knockoff or knockoff$+$ threshold $T$
given in \eqref{eqn:thresh} or \eqref{eqn:threshplus}, respectively.
Finally, rejecting
the $p$-values obeying $p_j \le1/2$ is the same as rejecting the
positive $W_j$'s. FDR control follows by applying Theorem~\ref{thm:seq}.

\section{Discussion}
\label{sec:discussion}
In this paper, we have proposed two variable selection procedures,
knockoff and knockoff$+$, that control FDR in the linear regression
setting and offer high power to discover true signals. We give
theoretical results showing that these methods maintain FDR control
under arbitrary feature correlations, even when variable selection methods
such as the Lasso may select null variables far earlier than some of
the weaker signals. The empirical performance of knockoff and knockoff$+$
demonstrates effective FDR control and excellent power in comparison
to other methods such as the Benjamini--Hochberg procedure (BHq) or
permutation-based methods.

A key ingredient in the knockoff and knockoff$+$ methods is the
``one-bit $p$-values'' obtained by comparing feature $\X_j$ with its
knockoff feature $\Xproxy_j$, and recording which of the two was first
to enter the Lasso path. This extreme discretization may be part of
the reason that the methods are conservative under low signal
amplitude, and could potentially be addressed by creating multiple
knockoffs $\Xproxy_j^{(1)}$, \ldots, $\Xproxy_j^{(m)}$ for each feature
$\X_j$. We will investigate the potential
benefits of a multiple-knockoff approach in future work.

In the theoretical portions of this paper,
we have entirely focused on controlling a Type-I error, namely, the
FDR; this paper did not study the false negative rate (FNR), defined as
the expected fraction of nondetected hypotheses among the true
nonnulls. This is a question of statistical power, which we have
demonstrated empirically, and we leave a theoretical analysis to future work.

Finally, the analysis and methods presented here rely on
the assumption that $\bolds\Sigma=\X^\top\X$ is invertible, which is
necessary so that $\X_j$ does not lie in the span of the remaining
$(p-1)$ features, and its effect on the response is, therefore,
identifiable.\footnote{When this is not the case, we cannot
distinguish between a mean function that depends on $\X_j$ versus
a mean function that depends on the linear combination of the
other variables.} In many modern applications, however, we are
interested in a regime where $p>n$ and $\bolds\Sigma$ is defacto
noninvertible.
Here, there are many types of common additional assumptions that may
allow us to overcome the identifiability problem---for instance,
sparse dependence structure among the features themselves and
between the mean response and the features. One possible approach
is to split the observations in two disjoint sets, using the first
to screen for a smaller set of potential features, and the second to
run the (low-dimensional) knockoff filter over this smaller set of
features only. Our ongoing research develops knockoff methodology
for high dimensions with higher power, and we certainly hope to
report on our findings in a future publication.

\section*{Acknowledgments}
E. Cand\`es would like to thank Ma{\l}gorzata Bogdan, Lucas
Janson, Chiara Sabatti and David Siegmund for helpful discussions and
useful comments about an early version of the manuscript. E. Cand\`es also
thanks Fr\'ed\'eric Benqu\'e for a productive brainstorming session
about names, Carlos Fernandez-Granda for his help in running
simulations on the Solomon cluster and Jonathan Taylor for suggesting
the analysis of the HIV dataset. We are grateful to Soo-Yon Rhee and
Rob Shafer for discussing the HIV dataset with us.

\begin{supplement}[id=suppA]
\stitle{Supplement to ``Controlling the false discovery rate via knockoffs''}
\slink[doi]{10.1214/15-AOS1337SUPP} 
\sdatatype{.pdf}
\sfilename{aos1337\_supp.pdf}
\sdescription{We provide details for the proofs of several theoretical
results in the paper.}
\end{supplement}

%

%




\printaddresses
\end{document}